\numberwithin{equation}{section}
\theoremstyle{plain}
\newtheorem{lemma}{Lemma}
\theoremstyle{remark}
\newtheorem{remark}{Remark}
\newcommand{\real}{\mathbb{R}}                    % Reals
\newcommand{\prob}{\mathbb{P}}                    % Probability
\newcommand{\Tr}[1]{\text{Tr}\left(#1\right)}     % Trace 
\DeclareMathOperator*{\argmin}{arg\,min}
\DeclareMathOperator*{\prox}{prox}
\newcommand{\ADMM}{\text{\fontsize{4}{4}\selectfont ADMM}}
\newcommand{\PROX}{\text{\fontsize{4}{4}\selectfont PROX}}
\newcommand{\counter}{l}
\newcommand{\Bset}{\mathcal{B}}
\algnewcommand\algorithmicinput{\textbf{Input:}}
\algnewcommand\Input{\item[\algorithmicinput]}
\algnewcommand\algorithmiciterate{\textbf{Iterate:}}
\algnewcommand\Iterate{\item[\algorithmiciterate]}
\algnewcommand\algorithmicinitialize{\textbf{Initialize:}}
\algnewcommand\Initialize{\item[\algorithmicinitialize]}
\algnewcommand\algorithmicoutput{\textbf{Output:}}
\algnewcommand\Output{\item[\algorithmicoutput]}
\begin{document}

\begin{frontmatter}
\title{Network classification with applications to brain connectomics}
\runtitle{Network classification}

\begin{aug}
\author{\fnms{Jes\'us D.} \snm{Arroyo-Reli\'on}\thanksref{m1}\ead[label=e1]{jarroyor@umich.edu}},
\author{\fnms{Daniel} \snm{Kessler}\thanksref{m2}\ead[label=e2]{kesslerd@umich.edu}}
\author{\fnms{Elizaveta} \snm{Levina}\thanksref{m2}
\ead[label=e3]{elevina@umich.edu}}
\and
\author{\fnms{Stephan F.} \snm{Taylor}\thanksref{m2}\ead[label=e4]{sftaylor@umich.edu}}

\runauthor{J.D. Arroyo Reli\'on et al.}

\affiliation{Johns Hopkins University\thanksmark{m1} and University of Michigan\thanksmark{m2}}

\address{J. D. Arroyo Reli\'on\\
Center for imaging science\\
Johns Hopkins University\\
Baltimore, Maryland 21218-2682\\
USA\\
\printead{e1}}

\address{D. Kessler\\
E. Levina\\
Department of Statistics\\
University of Michigan\\
Ann Arbor, Michigan 48109–1107\\
USA\\
\printead{e2}\\
\phantom{E-mail:\ }
\printead*{e3}}

\address{S. F. Taylor\\
Department of Psychiatry\\
University of Michigan\\
Ann Arbor, Michigan 48109–1107\\
USA\\
\printead{e4}}

\end{aug}

\begin{abstract}

While statistical analysis of a single network has received a lot of attention in recent years, with a focus on social networks, analysis of a sample of networks presents its own challenges which require a different set of analytic tools.  Here we study the problem of classification of networks with labeled nodes, motivated by applications in neuroimaging.   Brain networks are constructed from imaging data to represent functional connectivity between regions of the brain, and previous work has shown the potential of such networks to distinguish between various brain disorders, giving rise to a network classification problem.   Existing approaches tend to either treat all edge weights as a long vector, ignoring the network structure, or focus on graph topology as represented by summary measures while ignoring the edge weights.  Our goal is to design a classification method that uses both the individual edge information and the network structure of the data in a computationally efficient way, and that can produce a parsimonious and interpretable representation of differences in brain connectivity patterns between classes. We propose a graph classification method that uses edge weights as predictors but incorporates the network nature of the data via penalties that promote sparsity in the number of nodes, in addition to the usual sparsity penalties that encourage selection of edges.    We implement the method via efficient convex optimization  and provide a detailed analysis of data from two fMRI studies of schizophrenia.  

\end{abstract}

%\begin{keyword}[class=MSC]
%\kwd[Primary ]{60K35}
%\kwd{60K35}
%\kwd[; secondary ]{60K35}
%\end{keyword}

\begin{keyword}
\kwd{graph classification, high-dimensional data, variable selection, fMRI data}
\end{keyword}

\end{frontmatter}

\section{Introduction\label{sec:intro}}
Network data analysis has received a lot of attention in recent literature, especially unsupervised analysis of a single network which is thought of as generated from an exchangeable random graph model, for example \cite{bickel2009nonparametric,le2015sparse,zhang2016minimax,gao2015achieving}  and many others.   This setting is applicable to a number of real life scenarios, such as social networks, but there are situations where the network nodes are labeled and therefore not exchangeable, and/or more than one network is available for analysis, which have received relatively less attention.  Here we focus on the setting motivated by brain connectomics studies, where a sample of networks is available from multiple populations of interest (for example, mentally ill patients and healthy controls).    In this setting, each unit in the population (e.g., a patient) is represented by their own network, and the nodes (brain regions of interest) are labeled and shared across all networks through a registration process that maps all individual brains onto a common atlas.   There are many classical statistical inference questions one can ask in this setting, for example, how to compare different populations \citep{tang2017nonparametric, tang2017semiparametric}.   The question we focus on in this paper is a classification problem: given a training sample of networks with labeled nodes drawn from multiple classes, the goal is to learn the rules for predicting the class of a given network, and just as importantly, interpret these rules.

Network methods are a popular tool in the neuroscience literature \citep{bullmore2009complex,bullmore2011brain}. A brain network represents connectivity between different locations of an individual's brain. How connectivity is defined varies with the type of imaging technology used and the conditions under which data were collected.   In this paper, we focus on functional connectivity, which is a measure of statistical association between each pair of locations in the brain, constructed from functional magnetic resonance imaging (fMRI) data, although the methods we develop are applicable to any sample of weighted networks with labeled nodes. In fMRI studies, BOLD (blood oxygen-level dependent) signal, a known correlate of underlying neural activity, is measured at a sequence of time points at many spatial locations in the brain, known as voxels,  resulting in a 4-dimensional data array, with three spatial dimensions and a time index. Brain networks constructed from fMRI data have been successfully used for various tasks, such as differentiating between certain illnesses, or between types of external stimuli  \citep{bullmore2009complex}, and contain enough information to identify individual subjects \citep{finn2015functional}.    Extensive statistical literature has focused on the analysis of raw fMRI data \citep{lindquist2008statistical,zhou2013tensor,zhang2016spatiotemporal}, usually aiming to characterize brain activation patterns obtained from task-based fMRI experiments. In this paper, we focus on resting-state fMRI data, where no particular task is performed  and subjects are free to think about anything they want.  Thus registering the time dimension across different subjects is not possible.    The connectivity network approach, which averages over the time dimension in computing a measure of dependence between different voxels, is thus a natural choice, and has been widely used with multiple types of neuroimaging data.

\begin{figure}
	\centering
	\includegraphics[width=0.95\textwidth]{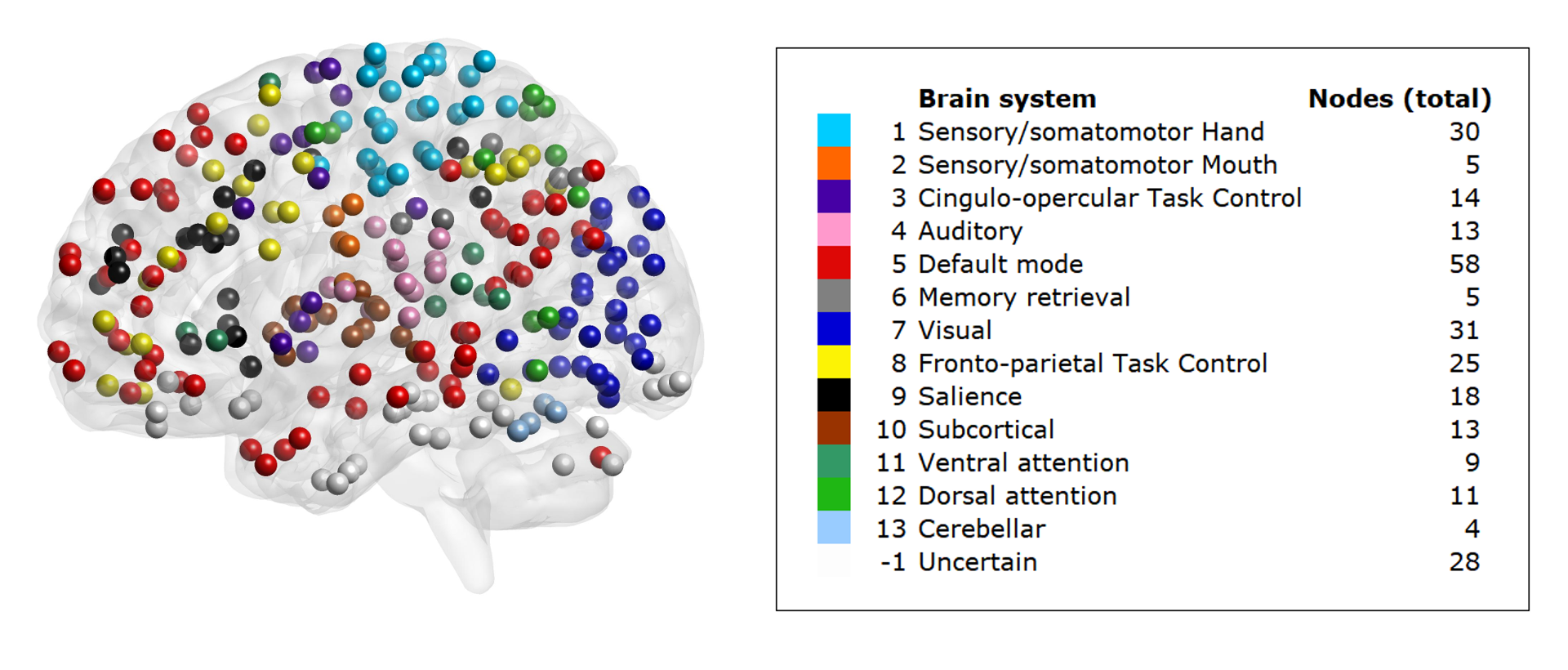}
	\caption{Regions of interest (ROIs) defined by \cite{power2011functional}, colored by brain systems, and the total number of nodes in each system. \label{fig:powerparcellation}} 
\end{figure}

Two different datasets are analyzed in this paper, both of resting state fMRI studies on schizophrenic patients and healthy controls.  One dataset, COBRE (54 schizophrenics and 70 controls), is publicly available \citep{Aine2017};  another, which we will refer to as UMich data (39 schizophrenics and 40 controls), was collected internally in the last author's lab.     Having two datasets on the same disease allows us to cross-check models trained on one of them for classification on the other to check the robustness of our approach. The raw data arrays undergo pre-processing and registration steps, discussed in detail in the Appendix \ref{appendix:data}, along with additional details on data collection.     To construct a brain network from fMRI measurements, a set of nodes is chosen, typically corresponding to regions of interests (ROIs) from some predefined parcellation. In our analysis we use the parcellation of \cite{power2011functional} (see Figure \ref{fig:powerparcellation}), which consists of 264 ROIs divided into 14 functional brain systems (in the COBRE data, node 75 is missing).  A connectivity measure is then computed for every pair of nodes, resulting in an adjacency matrix of size $264\times 264$. Many choices of connectivity measures are available  \citep{smith2013functional};  perhaps the most commonly used one is the Pearson correlation coefficient between locations, computed by averaging over the time dimension.   It has been argued that partial correlations are a better measure of connectivity \citep{varoquaux2013learning, narayan2015two}, but the choice depends on the final goal of analysis. In this paper we follow the vast majority of the connectomics literature and measure connectivity on each individual by using marginal correlations between the corresponding time series (see Figure \ref{fig:brain-adjacency}).  The correlations are then further rank-transformed and standardized;  see Appendix \ref{appendix:data} for details.    These transformations are intended to deal with subject-to-subject variability and the global signal regression issue \citep{gotts2013perils}, and although they lose some information, we observed that on our datasets classification based on standardized ranks of marginal correlations outperformed classification based on other connectivity measures, such as marginal correlations. The methods we develop here are applicable to networks that encode any type of connectivity measure.
\begin{figure}
	\centering
	\includegraphics[width=0.6\textwidth]{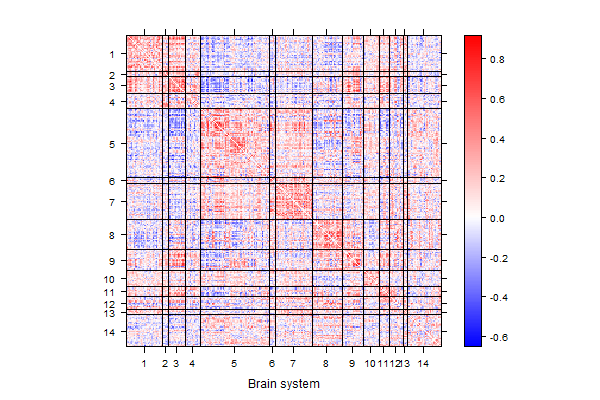}
	\caption{Brain network from one of the subjects, showing the value of the Fisher z-transformed correlations between the nodes, with the 264 nodes grouped into 14 brain systems.\label{fig:brain-adjacency}}
\end{figure}

The problem of graph classification has been studied previously in other contexts, with a substantial literature motivated by the problem of classification of chemical compounds \citep{srinivasan1996theories, helma2001predictive}, where graphs represent the compound's molecular structure. This setting is very different, with small networks of about 20 nodes on average, binary or categorical edges recorded with no noise, and different nodes corresponding to different networks, \citep{ketkar2009empirical}.   
Classification methods for chemical compounds are usually based on finding certain discriminative patterns in the graphs, like subgraphs or paths  \citep{inokuchi2000apriori, gonzalez2002graph}, and using them as features for training a standard classification method  \citep{deshpande2005frequent,kudo2004application,fei2010boosting}.  Computationally, finding these patterns is only possible on small binary networks.   

Another type of methods are based on graph kernels \citep{gartner2003graph,vishwanathan2010graph}, which  define a similarity measure between two networks. These kernels combined with support vector machines (SVMs) have been successfully used on small networks \citep{kashima2003marginalized,borgwardt2005protein}, but the curse of dimensionality makes local kernel methods unsuitable for large scale networks \citep{bengio2005non}.    On our datasets, graph kernel methods did not perform better than random guessing.

In the context of classifying  large-scale brain networks, two main approaches have been followed.  One approach is to reduce the network to its global summary measures such as the average degree, clustering coefficient, or average path length \citep{bullmore2009complex},  and use those measures as features for training a classification method. Previous studies have reported significant differences on some of these network measures for groups of patients with certain brain diseases compared with healthy controls 
\citep{supekar2008network, liu2008disrupted}, suggesting their usefulness as diagnostic biomarkers. However, global summary statistics collapse all local network information, which can harm the accuracy of classification and do not allow to identify local differences.   In our data analysis,  a method based on the network measures suggested in \cite{prasad2015brain}  performed poorly for classification  (see Section \ref{sec:data}).

An alternative approach to classification of large networks is to treat edge weights as a ``bag of features'', vectorizing the unique elements of the adjacency matrix and ignoring the network nature of the data.    This approach can leverage many existing classification methods for vectors, and provides an interpretation at the edge level if variable selection is applied \citep{richiardi2011decoding, craddock2009disease, zhang2012pattern}.  Spatial correlation between edges connecting neighboring nodes can be incorporated \citep{watanabe2014disease,scott2015false}, although the effectiveness of this regularization will depend on the parcellation used to define nodes. Alternatively, an individual test can be used for each edge to find significant differences between two populations, with a multiple testing correction and without constructing a classifier at all \citep{narayan2015two}. While these methods can deliver good predictions, their interpretability is limited to individual edge selection, which is less scientifically interesting than identifying differentiating nodes or regions, and they cannot account for network structure.

Taking the network structure into account can have benefits for both testing and classification settings. Some methods perform inference over groups of edges based on the community assignments of the nodes to which they are incident. For example, \cite{sripada2014volitional,sripada2014disrupted} introduced \emph{Network Contingency Analysis} which begins with massive univariate testing at each edge, and then counts the number of superthreshold connections in each \emph{cell}, a group of edges that connect nodes in two functional systems.  Nonparametric methods are then used to conduct inference on the count statistic for each cell, with multiple comparison correction for inference at the cell level.    Power can be improved  by applying a network-based  multiple testing dependence correction \citep{zalesky2010network}.  For classification,  better interpretability and potentially accuracy can be obtained if we focus on understanding which brain regions or interactions between them are responsible for the differences.     In somewhat related work, \cite{vogelstein2013graph} proposed to look for a minimal set of nodes which best explains the difference, though that requires solving a combinatorial problem.   Hypothesis testing on a type of graph average has also been proposed \citep{ginestet2017hypothesis}. Bayesian nonparametrics approaches for modeling populations of networks allow to test for local edge differences between the groups \citep{durante2017bayesian}, but are computationally feasible only for small networks.  

Our goal in this paper is to develop a high-dimensional network classifier  that uses all the individual edge weights but also respects the network structure of the data and  produces more interpretable results.   To achieve this goal, we use structured sparsity penalties to incorporate the network information by penalizing both the number of edges and the number of nodes selected.  Although our main application here is classification of brain connectivity networks, our methods are applicable to any weighted graphs with labeled nodes, and to general prediction problems, not just classification.  

The rest of this paper is organized as follows.   In Section \ref{sec:model}, we introduce our classifier and the structured penalties.  In Section \ref{sec:opt} we show how to efficiently solve the resulting convex optimization problem by a proximal algorithm, each step of which is a further optimization problem which we solve by the alternating direction method of multipliers (ADMM). 
The performance of our method is evaluated and compared with other methods using simulations in 
Section \ref{sec:simulations}.
In Section \ref{sec:data}, we analyze two brain connectivity datasets, each containing schizophrenic patients and healthy controls, and show that our regularization framework leads to state-of-the-art  accuracy while providing interpretable results, some of which are consistent with previous findings and some are new.  We conclude with a brief discussion in Section \ref{sec:disc}.

%%%%%%%%%%%%%%%%%%%%%%%%%%%%%%%%%%%%%%%%%%%%%%%%%%%%%%%%%%%%%%%%%%%%%%%%%%%%%%%%%%%%%%%%%%%%%%%%%%%%%%

\section{A framework for node selection in graph classification\label{sec:model}}

\subsection{A penalized graph classification approach\label{sec:penalty}}
We start from setting up notation.  All graphs we consider are defined on the same set of $N$ labeled nodes. % Define $\Graph=(\Vertex,\Edge)$ to be a complete graph over this set, that is, every pair of nodes  $V_i$ and $V_j$ on $\Vertex$ have an edge $E_{ij}=(V_i,V_j)$ on $\Edge$ that links them. This graph $\Graph$ defines the overall structure of the population of graphs.   -- I can't see why this is necessary. 
A graph can be represented with its adjacency matrix $A\in\real^{N\times N}$.   We focus on graphs that are undirected ($A_{ij} = A_{ji}$) and contain no self-loops ($A_{ii} = 0$).   These assumptions are not required for the derivations below, but they match the neuroimaging setting and simplify notation. Our goal is  predicting a class label $Y$ from the graph adjacency matrix $A$;  in this paper we focus on the binary classification problem where $Y$ takes values $\{-1,1\}$, although extensions from binary to multi-class classification or real-valued responses are straightforward.   Throughout this paper, we use $\|\cdot\|_p$ to denote the entry-wise $\ell_p$ norm, i.e., for a matrix $A\in\real^{m\times n}$, $\|A\|_p=\left(\sum_{i=1}^{m}\sum_{j=1}^n|A_{ij}|^p\right)^{1/p}$.

A standard general approach  is to construct a linear classifier, which predicts the response $Y$ from a linear combination of the elements of $A$, $ \langle A,B\rangle = \Tr{B^TA}$, where we arrange the coefficients in a matrix $B\in \real^{N\times N}$ to emphasize the network nature of the predictors.    We focus on linear classifiers here because variable selection is at least as important as prediction itself in the neuroimaging application, and setting coefficients to 0 is a natural way to achieve this.    The coefficients are typically estimated from training data by minimizing an objective consisting of a loss function plus a penalty.  The penalty can be used to regularize the problem to  make the estimator well-defined in high-dimensional problems, to select important predictors, and to impose structure, and many such penalties have been proposed, starting from the lasso \citep{tibshirani1996regression}.  Our focus is on designing a classifier in this framework that respects and utilizes the network nature of the predictors.    In brain networks in particular, neuroscientists believe that edges are organized in subnetworks, also called brain systems \citep{power2011functional}, that carry out specific functions, and 
certain subnetworks are important for prediction \citep{bullmore2009complex}, although different studies tend to implicate different regions \citep{fornito2012schizophrenia}.   Thus we aim to find nodes or subnetworks with good discriminative power, and hence select both the most informative nodes and edges.

Although the methods we develop here can be used on small networks,  our main focus is on the more challenging case of medium to large brain networks. In brain connectivity studies dealing with multiple subjects, while raw images may contain hundreds of thousands of voxels, they are commonly down-sampled according to a parcellation scheme with a coarser resolution, usually resulting in networks with hundreds or thousands of nodes representing ROIs (see for example \cite{kiar2018high}). This coarser resolution is essential for registration, as aligning different brains at a high resolution is much harder, but it still results in hundreds of thousands or millions of edges which serve as predictor variables.   Given the typical data sizes in this area of application, we focus on methods based on convex formulations, which allow for efficient and scalable implementations with convergence guarantees.

Let $\{(A^{(1)},Y_1),\ldots,(A^{(n)},Y_n)\}$ be the training sample of undirected adjacency matrices with their class labels, 
and let  $Y=(Y_1,\ldots,Y_n)$.  
A  generic linear classifier described above is computed by finding the coefficients $B$ defined by 
\begin{equation}
	\hat{B} =  \argmin _{B\in\Bset} \left\{ \ell(B )  + \Omega(B)\right\}, \label{eq:defin_est}
\end{equation}
where $\Bset = \left\{B\in\real^{N\times N}:B=B^T, \text{diag}(B) = 0\right\}$, $\Omega$ is a penalty, and 
$$\ell(B) = \frac{1}{n}\sum_{k=1}^n \tilde\ell(Y_k, A^{(k)}; B ) $$
is a loss function evaluated on the training data.  Our methodology can accommodate different choices of loss functions that can extend beyond classification problems (e.g., least squares or generalized linear models). The optimization algorithm presented in Section \ref{sec:opt} can work with any convex and continuously differentiable loss function, and further assumptions are required for consistency (see Section \ref{sec:theory}). In this paper, for the purpose of classification we use the logistic loss function in the simulations and data analysis, which is defined as 
\begin{equation*}
	\tilde{\ell}\left(Y_k, A^{(k)};B,b\right) = \log\left\{1+\exp\left(-Y_k \langle A^{(k)}, B\rangle+b\right)\right\} \ . 
\end{equation*}
The threshold $b$ is an additional parameter to be estimated.

To capture structural assumptions on important predictive edges, we focus on convex structured sparsity penalties \citep{bach2012structured} that  encourage a small number of active nodes, by which we mean nodes attached to at least one edge with a non-zero coefficient. One approach to finding a set of such nodes was proposed by \cite{vogelstein2013graph}, who called it a signal-subgraph, and proposed finding the minimal set of nodes (called signal vertices) which together are incident to all selected edges (but not every node connected to a selected edge is a signal vertex).    Finding this set is a combinatorial optimization problem, and the set is not always uniquely defined.    Instead, we focus on convex formulations that allow for efficient computation and encourage small active node sets indirectly.

Other convex penalties have been used for fMRI data as a way to enforce spatial smoothness in the solution  \citep{grosenick2013interpretable,xin2014efficient,hu2015local}. These methods assume that voxels are equally spaced in the brain, and neighboring voxels are highly correlated. In particular, \cite{watanabe2014disease} proposed penalties for brain network classification using these spatial assumptions.  Here, instead of enforcing a spatial regularization directly, we aim for a regularization that can be applied to any type of network data, and in particular to brain networks with coarse and/or uneven parcellations where enforcing spatial smoothness may not work as well.   In any case, the flexibility of convex optimization algorithms allows one to easily incorporate additional spatially-informed penalties if needed.

\subsection{Selecting nodes and edges through group lasso}

To reflect the network structure of the predictors, we use a penalty that promotes a sparse classifier not only in the number of edges used, but also in the number of nodes.   The group lasso penalty  \citep{yuan2006model} is designed to eliminate a group of variables simultaneously.   Here we penalize the number of active nodes by treating all edges connected to one node as a group. Then eliminating this group (a row of coefficients in the matrix $B$) is equivalent to de-activating a node. The group penalty is defined as
\begin{equation}
	\Omega_{\lambda, \rho}(B)= \lambda\left(\sum_{i=1}^N\|B_{(i)}\|_2 + \rho\|B\|_1\right), \label{eq:GL_penalty}
\end{equation}
where $B_{(i)}$ denotes the vector of edge weights incident to the $i$-th node (or equivalently, the $i$-th row or column of $B$), and $\lambda, \rho\geq 0$ are tuning parameters.  Note that the constraint $B=B^T$ makes the groups overlap, since a coefficient $B_{ij}$  belongs to groups associated with the nodes $i$ and $j$, and therefore, the edge between nodes $i$ and $j$ would be selected only if both nodes are activated.    The second term in the penalty $\rho\|B\|_1$ acts as the usual lasso penalty to promote sparsity inside the group \citep{friedman2010note}, allowing to select a subset of edges for an active node. Due to the overlap in the groups, this lasso penalty is usually necessary in order to produce sparse solution (see Proposition \ref{prop:error}). The constraint $\text{diag}(B) = 0$ in \eqref{eq:defin_est} is automatically enforced with this formulation. %On the other hand, due to the overlap of the groups, small values of $\rho$ will not produce any zeros in the coefficients of the solution (see Proposition \ref{prop:error}). This problem can also be solved by other formulations of overlapping group lasso, as explained in Remark \ref{remark:overlapping_union}. 

\begin{remark} \label{remark:overlapping_union}
	An alternative to the constraint in the problem \eqref{eq:defin_est} is to optimize over the set \[\tilde\Bset = \left\{B\in\real^{N\times N}, \text{diag}(B) = 0\right\}.\] Without the symmetry constraint and assuming undirected graphs, the penalty \eqref{eq:GL_penalty} is equivalent to the overlapping group lasso formulation of \cite{jacob2009group}. This formulation has some advantages. Since it gives group lasso without overlaps, the lasso penalty $\rho\|B\|_1$ is not required to obtain sparse solutions, and more efficient optimization algorithms exist for this case. This approach would loosely correspond to the idea of selecting signal nodes as in \cite{vogelstein2013graph}, in the sense that an edge can be selected if at least one of its nodes is selected, and the second node could be inactive. The downside is that each edge now corresponds to two different coefficients $B_{ij}$ and $B_{ji}$, the problem encountered by all variable selection methods that ignore symmetry, such as \cite{meinshausen2006high}. The standard solution for this problem, as suggested by \cite{jacob2009group}, is to take the average of the coefficients. 
	Intuitively, one would expect that the formulation using $\Bset$ would be better when the significant edges are incident to a small set of nodes, since both nodes have to be active for an edge to be selected, while using $\tilde \Bset$ may be better when for some nodes most of their edges are significant, creating ``significant hubs''.  Since in our application we are primarily looking for discriminative brain subnetworks, we focus on the symmetrically constrained formulation for the rest of the paper.  We also found that in practice this second formulation results in less accurate classifiers for the neuroimaging data discussed in Section \ref{sec:data}.
\end{remark}

\begin{remark} \label{remark:directedgraph} The analogue to \eqref{eq:GL_penalty} for directed graphs would assign coefficients $B_{ij}$ and $B_{ij}$ to the same group, resulting in the penalty 
	\begin{equation}
		\Psi_{\lambda, \rho}(B)= \lambda\left(\sum_{i=1}^N\sqrt{\sum_j\left(B_{ij}^2+B_{ji}^2\right)} + \rho\|B\|_1\right), \label{eq:GL_penalty_directed}
	\end{equation}
	where $B\in\tilde{\mathcal{B}}$. Alternatively, we can also use the formulation of Remark \ref{remark:overlapping_union}, by replicating the variables and estimating two matrices of coefficients, say $B^{(1)}$ and $B^{(2)}$, with the penalty 
	\begin{equation*}
		\tilde\Psi_{\lambda, \rho}(B^{(1)},B^{(2)}) = \lambda\left[\sum_{i=1}^N\sqrt{\sum_j\{(B^{(1)}_{ij})^2+(B^{(2)}_{ji})^2\}} + \rho(\|B^{(1)}\|_1 + \|B^{(2)}\|_1 )\right], \label{eq:GL_penalty_directed_union}
	\end{equation*}
	with $B^{(1)},B^{(2)}\in\tilde{\mathcal{B}}$, and set the coefficients matrix to $B=\left(B^{(1)} + B^{(2)}\right)/2$.   This formulation will again not directly select subnetworks as discussed in Remark \ref{remark:overlapping_union}. 
\end{remark}

Finally, for numerical stability we add an extra ridge penalty term $\frac{\gamma}{2}\|B\|_F^2=\frac{\gamma}{2}\text{Tr}\left(B^TB\right)$, with $\gamma$ a small and fixed constant. There are several benefits of combining ridge and lasso penalties (see for example \cite{zou2005regularization}).  The parameter $\gamma$ can be potentially considered as an additional tuning parameter, but  here we only use a small fixed constant $\gamma$ in order to avoid numerically degenerate solutions. In practice, the results are not sensitive to the exact value of $\gamma$.   

Putting everything together, to fit our graph classifier, we solve the problem 
\begin{align}
	(\hat{B},\hat{b}) =  \argmin_{B\in\mathcal{B},b\in\real} & \left\{ \frac{1}{n}\sum_{k=1}^n\log\left(1+\exp(-Y_k \langle B,A^{(k)}\rangle+b)\right) + \frac{\gamma}{2}\|B\|_F^2 +  \right. \label{eq:main_problem} \\
	 & \left.  \lambda\left( \sum_{i=1}^N\|B_{i}\|_2+\rho\|B\|_1\right)\right\} 
	\nonumber
\end{align}
for given values of $\lambda$ and $\rho$, which will be chosen by cross-validation.

%%%%%%%%%%%%%%%%%%%%%%%%%%%%%%%%%%%%%%%%%%%%%%%%%%%%%%%%%%%%%%%%%%%%%%%%%%%%%%%%%%%%%%%%%%%%%%%%%%%%%%%%%%
\section{The optimization algorithm\label{sec:opt}}

Our optimization algorithm to solve the problem \eqref{eq:main_problem} combines two common approaches to convex optimization: proximal algorithms and alternating direction method of multipliers (ADMM).  We use an accelerated version of the proximal algorithm \citep{beck2009fast} to solve the main problem \eqref{eq:main_problem}. In each step, we need to calculate a proximal operator, which is a further convex optimization problem solved with the ADMM algorithm.

The main optimization difficulty comes from the overlapping groups. Some algorithms have been proposed for this case, including a subgradient descent method \citep{duchi2009efficient}, which has a slow rate of convergence, or proximal algorithms based on smoothing the original problem \citep{yuan2011efficient,chen2012smoothing}. Although smoothing yields fast algorithms, it is not clear that the sparsity pattern is preserved with those approximations. We follow an approach similar to \cite{yuan2011efficient} and \cite{chen2012smoothing}, but solve the proximal operator for the penalty \eqref{eq:GL_penalty} directly using the ADMM method. This can potentially give a more accurate sparsity pattern, and the flexibility of the algorithm allows for additional penalties if desired, such as spatial smoothing similar to \cite{watanabe2014disease} (see Remark \ref{remark:optimization-other-penalties}).

The main problem \eqref{eq:defin_est} is solved with a proximal algorithm (see \cite{parikh2013proximal}). Recall that the proximal operator  for a function $f$ is defined as $\prox_f(v) = \argmin_x\{f(x)+\frac{1}{2}\|x-v\|_2^2\}$. Starting with an initial value $B^{(0)}\in\real^{N\times N}$, a proximal algorithm solves the optimization problem \eqref{eq:defin_est} 
by iteratively calculating the proximal operator of $\Omega=\Omega_{\lambda,\rho}$ for a descent direction of the differentiable loss function $\ell$.  We use an accelerated version of the algorithm \citep{beck2009fast}, which for each $k=2,\ldots,$ until convergence, performs the updates
\begin{align}
	W^{(k)} & =  B^{(k-1)} + \frac{k-1}{k+2}\left(B^{(k-1)} - B^{(k-2)}\right) \label{eq:proximalalg_direction}\\
	B^{(k)} & =  \prox_{t^{(k)}\Omega}\left\{W^{(k)} - t^{(k)}\nabla\ell(W^{(k)})\right\}\label{eq:proximal_signalapprox}\\
	& =  \argmin_{B\in\Bset} \left\{\frac{1}{2}\left\|B-\left(W^{(k)}-t^{(k)}\nabla \ell(W^{(k)})\right)\right\|_2^2 + t^{(k)}\Omega(B) \right\},\nonumber
\end{align}
where $\nabla\ell(W)\in\real^{N\times N}$ is the gradient of the loss function $\ell$ at $W$ and $\{t^{(k)}\}$ is a sequence of positive values. If $\nabla\ell$ is Lipschitz continuous, with $L$ its Lipschitz constant, the sequence of values $\ell(B^{(k)})+\Omega(B^{(k)})$ converges to the optimal value at rate $O(1/k^2)$ if $t^{(k)}\in[0,1/L)$.    The value of $t^{(k)}$ can be chosen using a backtracking search \citep{beck2009fast}, which decreases this value until the condition
\begin{equation}
	\ell(B^{(k)}) \leq \ell(W^{(k)}) + \left\langle\nabla\ell(W^{(k)}), B^{(k)}-W^{(k)} \right\rangle  + \frac{1}{2t^{(k)}}\|B^{(k)}-W^{(k)}\|_2^2
	\label{eq:backtrackingcondition}
\end{equation}
is satisfied. This procedure ensures that step sizes $\{t^{(k)}\}$ become smaller as the algorithm progresses, until $t^{(k)}<1/L$. In practice, $L$ might be large, which can make the algorithm slow to converge. It has been observed in other sparse high-dimensional problems that search strategies for $t^{(k)}$ which allow for $t^{(k)}  > 1/L$ when appropriate can actually speed up convergence  \citep{scheinberg2014fast, hastie2015statistical}. 
We use a strategy of this type, allowing $t^{(k)}$ to increase by a factor of $\alpha\geq 1$ if the relative improvement in the loss function on iteration $k$ becomes small. We observed that this strategy significantly reduces the number of iterations until convergence. The entire procedure is summarized in Algorithm \ref{alg:prox} on the Appendix~\ref{appendix:optimization}.

The logistic loss function of \eqref{eq:main_problem} has an extra parameter $b$. Rather than including it as an unpenalized coefficient for a constant covariate, we use block coordinate descent and solve for $b$ separately. This is convenient because the threshold $b$ and the matrix of coefficients $B$ may not be on the same scale. Thus, $b$ can be updated by solving $b^{(k+1)}  = \argmin_{b\in\real} \ell(B^{(k)},b)$, which is easy to compute via Newton's method.

The proximal algorithm requires solving the proximal operator \eqref{eq:proximal_signalapprox}, which has no closed form solution for the penalty \eqref{eq:GL_penalty} under the symmetry constraint. Strategies based on smoothing this penalty have been proposed \citep{yuan2011efficient,chen2012smoothing}. However, to allow for variable selection which results from non-differentiability of the penalty, we aim to solve the proximal operator directly using ADMM  (see \cite{boyd2011distributed} for a review). Note that if the symmetric constraint is relaxed as in Remark \ref{remark:overlapping_union}, the proximal operator has a closed form solution (see Remark \ref{remark:optimization-union}).

The ADMM works by introducing additional constraints and performing coordinate descent in the corresponding augmented Lagrangian function. Setting $Z=W^{(k)}-t^{(k)}\nabla \ell(W^{(k)})$ and $t=t^{(k)}$, and introducing the variables $Q,R\in\real^{N\times N}$, we can formulate \eqref{eq:proximal_signalapprox} as a convex optimization problem
\begin{equation}
	\begin{aligned}
		& \min_{\tilde B,Q,R} & &\frac{1}{2}\|\tilde B-Z\|_2^2 + t\lambda\left(\sum_{i=1}^N\|Q_{(i)}\|_2 + \rho\|R\|_1\right)\\
		& \text{subject to}
		& & \tilde B = Q, \quad\tilde B = R, \quad\tilde B = \tilde B^T,\quad \text{diag}(\tilde B) = 0.
	\end{aligned} \label{eq:ADMM_general}
\end{equation}
The ADMM algorithm introduces the multipliers $U,V\in\real^{N\times N}$ and a penalty parameter $\mu>0$ to perform gradient descent on the Lagrangian of \eqref{eq:ADMM_general}, given by $\mathcal{L}_\mu=\mathcal{L}_\mu(\tilde{B},Q,R,U,V)$ as
\begin{align}
	\mathcal{L}_\mu  = &\frac{1}{2}\|\tilde{B}-Z\|_2^2+ t\lambda\left(\sum_{i=1}^N\|Q_{(i)}\|_2 + \rho\|R\|_1\right) +  \langle U, \tilde{B}-Q\rangle+\label{eq:lagrangian}\\
	&  \langle V, \tilde{B}-R\rangle  \frac{\mu}{2}\left(\|\tilde{B}-Q\|_2^2+\|\tilde{B}-R\|_2^2 + \|\tilde{B}-\tilde{B}^T\|_2^2\right).\nonumber
\end{align}
The value $\mu$ controls the gap between dual and primal feasibility. In practice, we observed that setting $\mu = 0.1$ gives a good balance between primal and dual feasibility, although other self-tuning methods are available \citep{parikh2013proximal}. This function is optimized by coordinate descent, with each variable updated to minimize the value of $\mathcal{L}_\mu$ while all the other variables are fixed. This update has a closed form; the detailed steps of the ADMM are shown in Algorithm \ref{alg:admm} on the Appendix~\ref{appendix:optimization}. These steps are performed until the algorithm converges within tolerance $\epsilon^\ADMM>0$. %, where the convergence is measured according to the dual and primal residuals, detailed in Algorithm \ref{alg:admm}. 
Note that ADMM will be performed in each iteration of the algorithm to solve \eqref{eq:main_problem} and thus tolerance $\epsilon^\ADMM$ can be decreased as the algorithm progresses. On the other hand, performing only one iteration of algorithm \eqref{alg:admm} gives a similar algorithm to the one of \cite{chen2012smoothing}.

\begin{remark}\label{remark:optimization-other-penalties}
	The ADMM  makes it very easy to incorporate additional penalties.  If $\Psi$ is a new penalty, we can rewrite \eqref{eq:ADMM_general} by introducing an additional parameter $\tilde{Q}$ so it becomes
	\begin{equation*}
		\begin{aligned}
			& \min_{\tilde B,Q,\tilde{Q},R} & &\frac{1}{2}\|\tilde B-Z^{(k)}\|_2^2 + t\lambda\left(\sum_{i=1}^N\|Q_{(i)}\|_2 + \rho\|R\|_1\right) + t\Psi(\tilde{Q})\\
			& \text{subject to}
			& & \tilde B = Q,\quad  \tilde B = \tilde Q, \quad\tilde B = R, \quad\tilde B = \tilde B^T,\quad \text{diag}(\tilde B) = 0.
		\end{aligned} 
	\end{equation*}
	We can obtain the Lagrangian formulation \eqref{eq:lagrangian} in a similar manner, and include new parameters in the ADMM updates, which can be performed efficiently as long as the proximal operator of $\Psi$ has a closed form solution. This is in fact the case for some other penalties of interest, such as the GraphNet penalty \citep{grosenick2013interpretable,watanabe2014disease}, which can incorporate spatial location information.  
\end{remark}

\begin{remark}\label{remark:optimization-union}
	The alternative formulation for the graph penalty given in Remark \ref{remark:overlapping_union} corresponds to standard sparse group lasso \citep{friedman2010note}. In particular, we can still employ the proximal algorithms \eqref{eq:proximalalg_direction} and \eqref{eq:proximal_signalapprox}, but instead optimize over the set $\tilde{\Bset}$. Without the symmetric constraint on $B$, the overlap in the group lasso penalty disappears, and this vastly simplifies the problem. Using Theorem 1 of \cite{yuan2011efficient}, the update for $B^{(k)}$ has a closed form solution given by
	\begin{align}
		Y^{(k)} & =  B^{(k-1)} + \frac{k-2}{k}\left(B^{(k-1)} - B^{(k-2)}\right) \label{eq:proximalalg_direction_union}\\
		Z^{(k)}_{ij} & = \left(1- \frac{\lambda \rho}{\left|Y^{(k)}_{ij}-t_k\nabla_{ij}\ell(Y^{(k)})\right\|_2}\right)_+\left(Y_{ij}^{(k)}-t_k\nabla_{ij} \ell(Y^{(k)})\right)\\
		B^{(k)}_{(i)} & =  \left(1-\frac{\lambda}{\left\|Z^{(k)}_{(i)}\right\|_2}\right)_+\left(Z^{(k)}_{(i)}\right),\quad\quad i\in[N].\label{eq:proximal_union}
	\end{align}
\end{remark}

%%%%%%%%%%%%%%%%%%%%%%%%%%%%%%%%%%%%%%%%%%%%%%%%%%

\section{Theory\label{sec:theory}}

In this section, we show that the solution of the penalized problem \eqref{eq:GL_penalty} can recover the correct subgraph corresponding to the set of non-zero coefficients, and give its rates of convergence.  The theory is  a consequence of the results of \cite{lee2015model} for establishing model selection consistency of regularized M-estimators under geometric  decomposability (see Appendix for details). We present explicit conditions for our penalty to work well, which depend on the data as well as the tuning parameters.

Let $B^\star\subset\real^{N\times N}$ be the unknown parameter we seek to estimate, and we assume there is a set of active nodes $\mathcal{G}\subset[N]$ with $|\mathcal{G}|=G$, so that  $B^\star_{ij}=0$ if $i\in\mathcal{G}^C$ or $j\in\mathcal{G}^C$. We allow some edge weights inside the subgraph defined by $\mathcal{G}$ to be zero, but we focus on whether the set $\mathcal{G}$ is correctly estimated by the set $\hat{\mathcal{G}}$ of active nodes in $\hat{B}$.   Denote by $\mathcal{M} \subseteq\real^{N\times N}$ the set of matrices where the only non-zero coefficients appear in the active subgraph, that is,
\begin{equation}
	\mathcal{M}=\left\{B\in\real^{N\times N}\left| B_{ij}=0\text{ for all }i\in\mathcal{G}^C\text{ or }j\in\mathcal{G}^C, B=B^T\right.\right\}\label{eq:m_activeset}
\end{equation}
There are two main assumptions on the loss function $\ell$ required for consistent  selection in high-dimensional models \citep{lee2015model}. The first assumption is on the convexity of the loss function around $B^\star$, while the second assumption bounds the size of the entries in the loss Hessian between the variables in the active subgraph and the rest. Let the loss Hessian $\nabla^2\ell(B^\star)\in\real^{N\times N}\otimes\real^{N\times N}$ be defined by 
\[\nabla^2_{(i,j),(k,l)}\ell(B) = \frac{\partial^2\ell(B)}{\partial B_{ij}\partial B_{kl}},\]
and define the matrix $H_{(i,j),\mathcal{G}}\in\real^{G\times G}$ with $(i,j)\in\left(\mathcal{G}\times\mathcal{G}\right)^C$ such that 
%\begin{equation}
%\left(H_{(i,j),\mathcal{G}}\right)_{k,l} = \Tr{\nabla^2_{(i,j),(\mathcal{G},\mathcal{G})}\ell(B^\star) \left(\nabla^2_{(\mathcal{G},\mathcal{G}),(\mathcal{G},\mathcal{G})}\ell(B^\star)_{(k,\cdot),(l,g_2)}\right)^{-1}},\quad\quad\quad 1\leq k,l\leq G, \label{eq:informationmatrix}
%\end{equation}
\begin{equation}
	\left(H_{(i,j),\mathcal{G}}\right)_{k,l} = \text{Tr}\left\{\left(\nabla^2_{(i,j),(\mathcal{G},\mathcal{G})}\ell(B^\star)\right)
		\Lambda_{(k,l),(\cdot,\cdot)}\right\},\quad 1\leq k,l\leq G,, \label{eq:informationmatrix}
\end{equation}
where $\Lambda\in\real^{G\times G}\otimes\real^{G\times G}$ is a tensor such that $\text{Mat}(\Lambda)$ is a pseudoinverse of $\text{Mat}\left(\nabla^2_{(\mathcal{G},\mathcal{G}),(\mathcal{G},\mathcal{G})}\ell(B^\star)\right)$, and Mat is the operation that unfolds the entries of a tensor $\Lambda$ into a $G^2\times G^2$ matrix. The matrix $H_{(i,j),\mathcal{G}}$ measures how well the variable corresponding to the edge $(i,j)$ can be represented by the variables in the active subgraph.
\newtheorem{assump}{Assumption}
\begin{assump}[Restricted Strong Convexity] \label{assump:RSC}
	There exists a set $C\subset\real^{N\times N}$ with $B^\star\in C$, and constants $m>0$, $\tilde L<\infty$ such that
	\begin{align*}
		\sum_{i,j} \Delta_{i,j}\text{Tr}\left\{\left(\nabla^2_{(i,j),(\cdot,\cdot)}\ell(B)\right)\Delta\right\} %\Tr{\Delta^T\nabla^2\ell(B)\Delta} 
		& \geq  m \|\Delta\|_2^2, \quad \forall B\in C\cap \mathcal{M},\Delta\in C\cap \mathcal{M}\\
		\|\nabla^2\ell(B) - \nabla^2\ell(B^\star)\|_2 & \leq  \tilde L\|B-B^\star\|_2,\quad\forall B\in C.
	\end{align*}
\end{assump}
\begin{assump}[Irrepresentability] There exists a constant $0<\tau<1$ such that
	\begin{equation*}
		\max_{i\in\mathcal{G}^C}\left\|\left(\sum_{k=1}^G\|(H_{(i,j),\mathcal{G}})_{k\cdot}\|_2\right)_{j=1}^N\right\|_2= 1-\tau<0.
	\end{equation*}
	%  This is an alternative definition from Bach's paper	
	%	\begin{equation*}
	%	\max_{i\in\mathcal{G}^C}\frac{1}{2} \left\| \left(\Tr{H_{(i,j),\mathcal{G}} DB_{\mathcal{G}\mathcal{G}}^\star}\right)_{j=1}^N \right\|_2 = 1-\tau<0,
	%	\end{equation*}
	%	with $D$ a diagonal matrix such that $D_{ii}=1/\|B^\star_{g_i\cdot}\|_2$ for $g_1,\ldots,g_G\in \mathcal{G}$.
	\label{assump:irrepresentability}
\end{assump}

This version of the irrepresentability condition corresponds to the one usually employed in group lasso penalties \citep{bach2008consistency}, but as we will see later, due to overlaps in the groups it further requires a lower bound on $\rho$ to work for model selection.

The first two assumptions are stated directly as a function of the loss for a fixed design case, but they can be substituted with bounds in probability for the case of random designs. In order to obtain rates of convergence, we do require a distributional assumption on the first derivative of the loss. This assumption can be substituted with a bound on $\max_i\|\nabla\ell(B^\star)_{(i)}\|_2$ (see Lemma \ref{lemma:probabilitybound} in the Appendix).
\begin{assump}[Sub-Gaussian score function] \label{assump:subgaussianity} Each pair in the sample $(A,Y)$ is independent and comes from a distribution such that the entries of the matrix $\nabla\tilde\ell(Y,A;B^\star)$ are subgaussian.  That is, for all $t>0$ there is a constant $\sigma^2>0$ such that
	\begin{equation*}
		\max_{i,j} \prob\left(\|\nabla_{ij}\tilde\ell\left(Y, A; B^\star\right)\|_\infty>t\right) \leq 2\exp\left(-{t^2/\sigma^2}\right) \ . 
	\end{equation*}
\end{assump}
With these assumptions, we establish consistency and correct model selection. The proof is given in the Appendix..
\newtheorem{prop}{Proposition} 
\begin{prop}\label{prop:error}
	Suppose Assumptions \ref{assump:RSC} and \ref{assump:subgaussianity} hold. 
	\begin{enumerate}
		\item[(a)\label{prop-parta}] Setting the penalty parameters as  $\lambda = c_1\sigma\sqrt{\frac{\log N}{n}}\min\left\{\frac{\sqrt{N}}{1+\rho}, \frac{1}{\rho}\right\}$ and $\rho\geq 0$ for some constant $c_1>0$, with probability at least $1-2/N$ the optimal solution of \eqref{eq:main_problem} is unique and satisfies
		\begin{equation}
			\|\hat{B}-B^\star\|_2 = O_P\left(\sigma N\sqrt{\frac{\log N}{n}}\right). \label{eq:frobenius_rate_noIrrep}
		\end{equation}
		\item[(b)] Suppose Assumption \ref{assump:irrepresentability} also holds.  If  $n> c_2G^2\sigma^2\log N$ for a constant $c_2>0$, setting the penalty parameters as  $\lambda = c_3\sigma\sqrt{\frac{\log N}{n}}\min\left\{\frac{\sqrt{N}}{1+\rho}, \frac{1}{\rho}\right\}$ for some constant $c_3>0$, and 
		\begin{equation}
			\rho > \frac{1}{\tau} - \frac{1}{\sqrt{G}} \ , 	\label{eq:rho_lowerbound}
		\end{equation}
		then 
		\begin{align}
			\|\hat{B}-B^\star\|_2 & = O_P\left(\sigma G\sqrt{\frac{\log N}{n}}\right), \label{eq:frobenius_rate_Irrep}\\
			\Bbb{P}\left(\hat{\mathcal{G}}\subseteq \mathcal{G}\right) & = 1-2/N.\label{eq:support_consistency}
		\end{align}
	\end{enumerate}
\end{prop}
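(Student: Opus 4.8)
The plan is to obtain Proposition \ref{prop:error} as a special case of the model-selection theory for regularized M-estimators with geometrically decomposable penalties in \citep{lee2015model}, by (i) exhibiting the required decomposition for $\Omega_{\lambda,\rho}$ relative to the model subspace $\mathcal{M}$ of \eqref{eq:m_activeset}, (ii) translating Assumptions \ref{assump:RSC}--\ref{assump:subgaussianity} into the hypotheses of that theory, and (iii) --- the step genuinely specific to our penalty --- showing that the overlap between the groups is precisely what forces the additional lower bound \eqref{eq:rho_lowerbound} on $\rho$ in part (b).

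\emph{Setting up the decomposition.} First I would absorb the ridge term $\tfrac{\gamma}{2}\|B\|_F^2$ into the smooth convex loss $\ell$ (this only raises the strong-convexity constant $m$ and leaves all stated rates unchanged, which is why $\gamma$ does not appear in Proposition \ref{prop:error}), and then write the structured part $\Omega_{\lambda,\rho}(B)=\lambda\bigl(\sum_{i}\|B_{(i)}\|_2+\rho\|B\|_1\bigr)$ as $h_{\mathcal{A}}(B)+h_{\mathcal{I}}(B)+h_{\mathcal{S}}(B)$, a sum of support functions: $\mathcal{S}=\{B:B=B^{T},\ \diag(B)=0\}$ is the subspace whose support function encodes the constraints in \eqref{eq:main_problem}, while $\mathcal{A}$ and $\mathcal{I}$ are the parts lying in $\mathcal{M}$, respectively in its orthogonal complement inside the space of symmetric zero-diagonal matrices, of the dual ball of $\sum_{i}\|\cdot_{(i)}\|_2+\rho\|\cdot\|_1$, with $\mathcal{A}$ compact after quotienting by $\mathcal{S}^{\perp}$. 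The structural facts to verify here are that $\partial\Omega_{\lambda,\rho}(B^{\star})$ for $B^{\star}\in\mathcal{M}$ splits into a part supported on $\mathcal{G}\times\mathcal{G}$ plus a free part on the complement whose norm in the dual of the penalty (a mixture of a row-wise $\ell_2$ norm and an $\ell_\infty$ norm scaled by $\rho$) is at most $\lambda$, and that the compatibility constants of this decomposition --- the largest value of $\Omega_{\lambda,\rho}(\Delta)/\lambda$ over $\|\Delta\|_2\le1$, taken over $\Delta\in\Bset$ and over $\Delta\in\mathcal{M}$ --- are of order $\max(\sqrt{N},\rho N)$ and $\max(\sqrt{G},\rho G)$; these are the sources of the factors $N^{2}$ and $G^{2}$ in \eqref{eq:frobenius_rate_noIrrep} and \eqref{eq:frobenius_rate_Irrep}.

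\emph{Discharging the hypotheses; the main obstacle.} Restricted strong convexity and the Hessian-Lipschitz bound required by \citep{lee2015model} are exactly Assumption \ref{assump:RSC}, with the set $C$ from that assumption. The control on the tuning parameter --- that $\lambda$ dominate a constant multiple of the inactive dual norm of $\nabla\ell(B^{\star})$ (after profiling out the intercept $b$) --- is where Assumption \ref{assump:subgaussianity} enters via the sub-Gaussian maximal inequality of Lemma \ref{lemma:probabilitybound}: on an event of probability at least $1-2/N$ one gets $\|\nabla\ell(B^{\star})\|_{\infty}=O\bigl(\sigma\sqrt{\log N/n}\bigr)$ and $\max_{i}\|\nabla\ell(B^{\star})_{(i)}\|_2=O\bigl(\sigma\sqrt{N\log N/n}\bigr)$, and since the dual norm of a sum of two norms is their infimal convolution, this inactive dual norm is $O\bigl(\sigma\sqrt{\log N/n}\,\min(\sqrt{N}/(1+\rho),1/\rho)\bigr)$, the prescribed order of $\lambda$. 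Part (b) additionally needs the incoherence/irrepresentability condition of \citep{lee2015model}; for a pure group-lasso penalty this is exactly Assumption \ref{assump:irrepresentability} \citep{bach2008consistency}, but the overlap of the groups complicates this and it is the hard step. An active edge $(k,l)$ with $k,l\in\mathcal{G}$ sits in the group-$\ell_2$ subgradient of both group $k$ and group $l$ and carries an $\ell_1$ subgradient of size up to $\rho\lambda$, so the worst-case active subgradient is inflated and Assumption \ref{assump:irrepresentability} alone is not enough; conversely an inactive edge incident to a node of $\mathcal{G}^{C}$ gains a slack of $\rho\lambda$ from its own $\ell_1$ subgradient before its row's group-$\ell_2$ constraint must bind. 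Balancing the inflated propagated subgradient --- whose magnitude Assumption \ref{assump:irrepresentability} controls by $1-\tau$, up to the $\sqrt{G}$ normalization already built into its statement --- against this slack, then dividing by $\lambda$ and rearranging, produces exactly $\rho>1/\tau-1/\sqrt{G}$. Carrying out this computation, i.e.\ proving that Assumptions \ref{assump:RSC} and \ref{assump:irrepresentability} together with \eqref{eq:rho_lowerbound} imply the \citep{lee2015model} incoherence condition for $\Omega_{\lambda,\rho}$ relative to $\mathcal{M}$, is the main technical lemma of the proof.

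\emph{Assembling the bounds.} For part (a), RSC together with the gradient bound give, by the usual basic-inequality argument for penalized M-estimators, $\|\hat{B}-B^{\star}\|_2=O_P\bigl(\lambda\max(\sqrt{N},\rho N)\bigr)$; substituting the prescribed $\lambda$ and dividing by $N^{2}$ yields \eqref{eq:frobenius_rate_noIrrep}, while uniqueness of $\hat{B}$ is immediate from strict convexity of \eqref{eq:main_problem} when $\gamma>0$ (and also follows from the uniqueness clause of \citep{lee2015model} for the penalty-only problem). For part (b), the incoherence condition established above gives, on the same $1-2/N$ event, that $\hat{B}$ is supported on $\mathcal{M}$, i.e.\ $\hat{\mathcal{G}}\subseteq\mathcal{G}$, which is \eqref{eq:support_consistency}; the error bound then sharpens to $\|\hat{B}-B^{\star}\|_2=O_P\bigl(\lambda\max(\sqrt{G},\rho G)\bigr)$, and since $\lambda\rho=O\bigl(\sigma\sqrt{\log N/n}\bigr)$ at the prescribed scaling this is $O_P\bigl(\sigma G\sqrt{\log N/n}\bigr)$, giving \eqref{eq:frobenius_rate_Irrep}. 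The sample-size requirement $n>c_2G^{2}\sigma^{2}\log N$ is exactly what makes this error small enough that the estimator stays inside the neighborhood $C$, so the Hessian-Lipschitz term of Assumption \ref{assump:RSC} is of lower order and the argument closes.
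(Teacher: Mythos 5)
Your overall route is the paper's route: reduce to the model-selection theory of \cite{lee2015model} for geometrically decomposable penalties, bound the dual norm of $\nabla\ell(B^\star)$ via Assumption \ref{assump:subgaussianity} (your $\min(\sqrt N/(1+\rho),1/\rho)$ scaling matches Lemma \ref{lemma:probabilitybound}), compute compatibility constants of order $\sqrt N+\rho N$ and $\sqrt G+\rho G$ (Lemma \ref{lemma:compconsts}), and prove part (a) by the basic-inequality argument --- all of that agrees with the paper. The gap is exactly at the step you yourself flag as ``the main technical lemma'': you never establish it, and the way you set it up makes it doubtful it would go through as described. First, your proposed decomposition $\Omega=h_{\mathcal A}+h_{\mathcal I}+h_{\mathcal S}$ directly in the space of symmetric zero-diagonal matrices is not verified and is not innocuous: because the symmetry constraint makes the row-groups overlap, and because the penalty is a \emph{sum} of a group norm and an $\ell_1$ norm, its dual ball does not split as a Minkowski sum of an ``active'' and an ``inactive'' piece with the span properties Lee et al.\ require. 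The paper's Lemma \ref{lemma:geom_decomp} resolves precisely this by lifting to $\real^{N\times 2N}$ through the replication operator $R$ in \eqref{eq:repeating_operator}, giving the group part and the $\ell_1$ part separate copies and encoding both the symmetry and the copy-equality in the subspace $E$; without this (or an equivalent device) the decomposability you invoke, and hence Theorem 3.1 of \cite{lee2015model}, is not available.

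Second, and consequently, your derivation of the lower bound \eqref{eq:rho_lowerbound} is a heuristic ``balancing'' of subgradient slack that asserts the answer rather than proving it. In the paper this is Lemma \ref{lemma:irrepresent}: by sublinearity of the gauge-based quantity $V$, the irrepresentability term of Lee et al.\ is split as
\begin{equation*}
\sup_{Z\in A} V\left(P_{M^\perp}\left(R\mathcal{H}(Z)-Z\right)\right)\;\leq\;\sup_{Z\in A} V\left(P_{M^\perp}R\mathcal{H}(Z)\right)+\sup_{Z\in A} V\left(P_{M^\perp}Z\right),
\end{equation*}
the first supremum is bounded by $1-\tau$ using Assumption \ref{assump:irrepresentability}, and the second --- the leakage of the active rows' group subgradients onto edges joining $\mathcal G$ to $\mathcal G^C$, absorbed optimally between the inactive rows' group ball (radius $1$) and the $\ell_1$ ball (radius $\rho$) in the lifted space --- is computed exactly to be $\sqrt G/(1+\rho\sqrt G)$, with a matching example showing this is tight. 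Requiring $1-\tau+\sqrt G/(1+\rho\sqrt G)<1$ is what yields $\rho>1/\tau-1/\sqrt G$; the same computation also drives the constant $\kappa_{\text{IC}}\le 3-\tau$ that enters the admissible range of $\lambda$ and hence the sample-size condition. Since \eqref{eq:support_consistency} and the sharpened rate \eqref{eq:frobenius_rate_Irrep} rest entirely on this verification, your part (b) is incomplete as written; the rest of the assembly (including attributing uniqueness to the ridge term and noting $\lambda\rho=O(\sigma\sqrt{\log N/n})$) is fine.
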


The part of the penalty associated with $\rho$ causes the solution to be sparse. Due to the overlap in the groups, a small value of $\rho$ will usually not result in zeros in the solution of the problem \eqref{eq:main_problem}.   The lower bound on $\rho$ in \eqref{eq:rho_lowerbound} ensures that the irrepresentability condition of \cite{lee2015model} holds (see Lemma \ref{lemma:irrepresent} in the Appendix). 

Proposition \ref{prop:error} ensures that, with high probability, all edges estimated to have non-zero weights are contained in the active subgraph, and quantifies the error in estimating the entries of $B^\ast$. To ensure that all active nodes are recovered, at least one edge corresponding to each active node needs to have a non-zero weight. A similar result can be obtained to guarantee recovery of all active nodes under a stronger assumption that the magnitude of the non-zero entries of $B^\star$ is bounded below by $|B_{ij}^\star|>c_4 G^2\lambda$ for a constant $c_4$. The condition in part (b) requires a sample size that grows faster than the size of the active subgraph, which in practice can be much smaller than the size of the network, making the method suitable for our applications in which the sample size is limited and the number of nodes is large. 
%%%%%%%%%%%%%%%%%%%%%%%%%%%%%%%%%%%%%%%%%%%%%%%%%%%%%%%%%%%%%%%%%%%%%%%%%%%%%%%%%%%%%%%%%%%%%

\section{Numerical results on simulated networks\label{sec:simulations}}

In this section, we evaluate the performance of our method using synthetic networks. We are interested in assessing the ability of the method to correctly identify predictive edges, its classification accuracy, and comparisons to benchmarks.  
We compare the different methods' edge selection performance in simulations using area under the curve (AUC).   
Brain connectomic networks are characterized by organization of nodes into communities \citep{bullmore2009complex}, in which nodes within the same community tend to have stronger connections than nodes belonging to different communities. In order to generate synthetic networks that mimic this property, we  introduce community structure using the stochastic block model (SBM)  \citep{holland1983stochastic}.    Before generating edges, we assign each node a community label,  $C_i$, where $C_i\in[K]$ for each $i\in[N]$. The node assignments are the same for all networks in the population. Given the community labels,  network edges are generated independently from a distribution that only depends on the community labels of the nodes associated with each edge. Since fMRI networks are real-valued networks, we generate edge weights from a Gaussian distribution, rather than the standard Bernoulli distribution normally used with the SBM.   Specifically, we draw each $A_{ij}$ independently from  $N(\mu_{C_iC_j}, \sigma^2)$, with $\mu\in\Bbb{R}^{K\times K}$ defined by 
\begin{equation*}
	\mu_{kl} = \left\{\begin{array}{cl}
		0.3, & \text{if $k=l$},\\
		0.1 & \text{if $k \neq l$},
	\end{array}\right.
\end{equation*}
and $\sigma^2=0.18$. These values were chosen to approximately match the distribution of edge weights in our datasets (see Section \ref{sec:data}). We set the number of nodes $N = 300$, with $K=12$ communities of size $25$ each. We work with undirected networks, so the adjacency matrices are symmetric, with 44,850 distinct edges. Although our method is able to scale to larger networks, this moderately sized setting is already highly computationally demanding for many of the comparison benchmarks.

To set up two different class distributions, we select a set of active nodes $\mathcal{G}$ first, defined by the nodes corresponding to some communities selected at random.
Then, we alter a set of differentiating edges $\mathcal{E}$ selected at random from $\mathcal{G}\times\mathcal{G}$ with probability $p$. For each edge $(i,j)\in\mathcal{E}$, the distribution in class $Y=-1$ is  $N(\mu_{C_iC_j}, \sigma^2)$, while the distribution in  class $Y=1$ is changed to  $N(0.2, \sigma^2)$. Figure \ref{fig:sim_example} shows example expected adjacency matrices for each class. We then generate 50 networks from each class, resulting in a sample size of $n=100$. We vary  $G=|\mathcal{G}|$ by changing the number of communities selected, and the value of $p$, to study the effect of the number of active nodes and the density of differentiating edges inside a subgraph.  The number of communities selected is set to 1 ($|\mathcal{G}|=25$), 2 ($|\mathcal{G}|=50$) and all communities ($|\mathcal{G}|=300$); note that in  the last scenario all nodes are active and hence the node structure is not informative at all. 

\begin{figure}
	\centering
	\begin{subfigure}[t]{0.5\textwidth}
		\centering
		\includegraphics[height=0.7\textwidth]{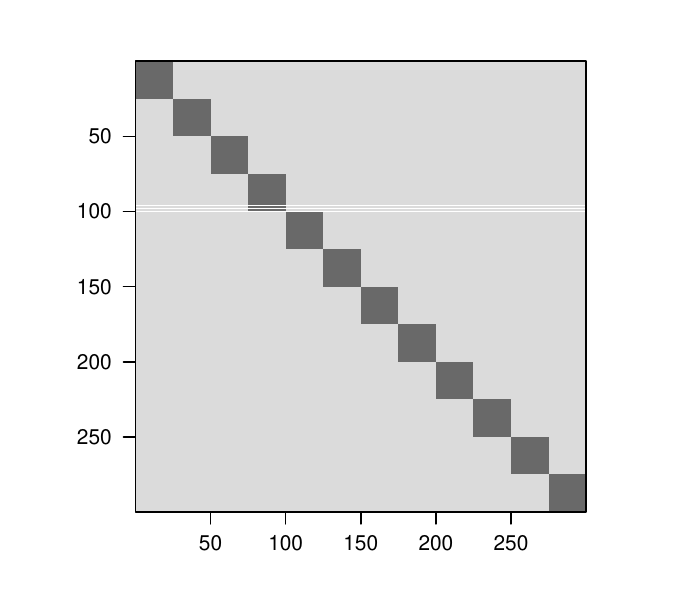}\\
		Class $Y=-1$
	\end{subfigure}%
	~ 
	\begin{subfigure}[t]{0.5\textwidth}
		\centering
		\includegraphics[height=0.7\textwidth]{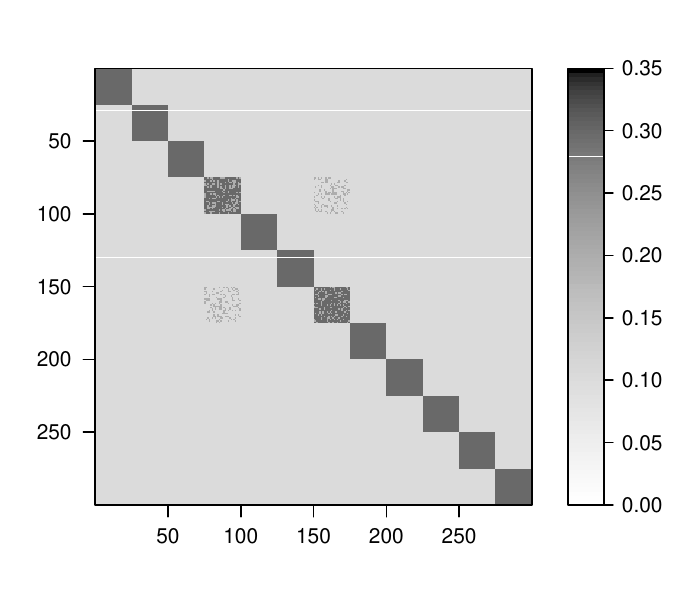}\\
		Class $Y=1$
	\end{subfigure}
	\caption{Expected adjacency matrices for each class. There are 50 active nodes $\mathcal{G}$ on communities 4 and 7, and edge weights on 25\% of the edges within $\mathcal{G}\times\mathcal{G}$ have been altered for the second class of networks ($Y=1$).\label{fig:sim_example}}
\end{figure}

Since we are interested is identifying predictive edges and nodes, we use the AUC of the receiver operating characteristic (ROC) curve, for both edge and node selection. For each method, we calculate the ROC curve by changing its corresponding sparsity parameter to vary the number of edges selected . For a selection method $\mathcal{M}$ and a sparsity parameter $\eta$ let $\hat{\mathcal{E}}(\mathcal{M}, \eta)$ be the set of edges selected by $\mathcal{M}$, and $\hat{\mathcal{G}}(\mathcal{M}, \eta)$ the set of active nodes corresponding to $\hat{\mathcal{E}}(\mathcal{M}, \eta)$. We calculate the edge false positive rate (EFPR) and edge true positive rate (ETPR) as
\begin{equation*}
	\text{EFPR}(\mathcal{M}, \eta)  =  \frac{\left|\hat{\mathcal{E}}(\mathcal{M}, \eta)\cap \mathcal{E}^C \right|}{\left|\mathcal{E}^C\right|}  \ , \ \ \ 
	\text{ETPR}(\mathcal{M}, \eta)  =  \frac{\left|\hat{\mathcal{E}}(\mathcal{M}, \eta)\cap \mathcal{E} \right|}{\left|\mathcal{E}\right|} \ .
\end{equation*}
The node FPR and TPR are calculated similarly. 

We also evaluate the prediction accuracy of the methods. For each method, we use 5-fold cross-validation to select the best tuning parameter using the training data, and then compute the test error on a different dataset simulated under the same settings.   The AUC and test errors reported are averaged over 30 replications.

Methods for benchmark comparisons on simulated networks were selected based on their good performance on real data  (see Section \ref{sec:data}). For our method (GC), we vary the parameter $\rho$ and compare results for two different values of $\lambda$, $.05$ (GC1) and $10^{-4}$ (GC2). For unstructured regularized logistic regression, we use the elastic net \citep{friedman2009glmnet}, with a fixed $\alpha=0.02$ (ENet). The performance of elastic net is not very sensitive to different values of $\alpha$, but the number of variables that the method is able to select with large values is limited (including the case of $\alpha=1$ that corresponds to the lasso). A support vector machine with $\ell_1$ penalty \citep{zhu20041,becker2009penalizedsvm} is also included (SVML1) for comparison, and additionally we evaluate the classification error of the original support vector machines (SVM)  \citep{cortes1995support}.  For both SVMs, we use linear kernels, which performed better than nonlinear ones. We also consider an independent screening method for variable selection based on the two sample $t$-statistic (T-stat). Finally, we also compare with the signal-subgraph method (SS) \citep{vogelstein2013graph} which is the only other method that takes into account the network structure of the predictor variables. Note that the signal subgraph is designed for binary networks, so in order to apply it we thresholded each edge at the population mean. For each method, we fit 10 different tuning parameters to change the sparsity of the solution.

\begin{figure}
	\centering
	\includegraphics[width=0.9\textwidth]{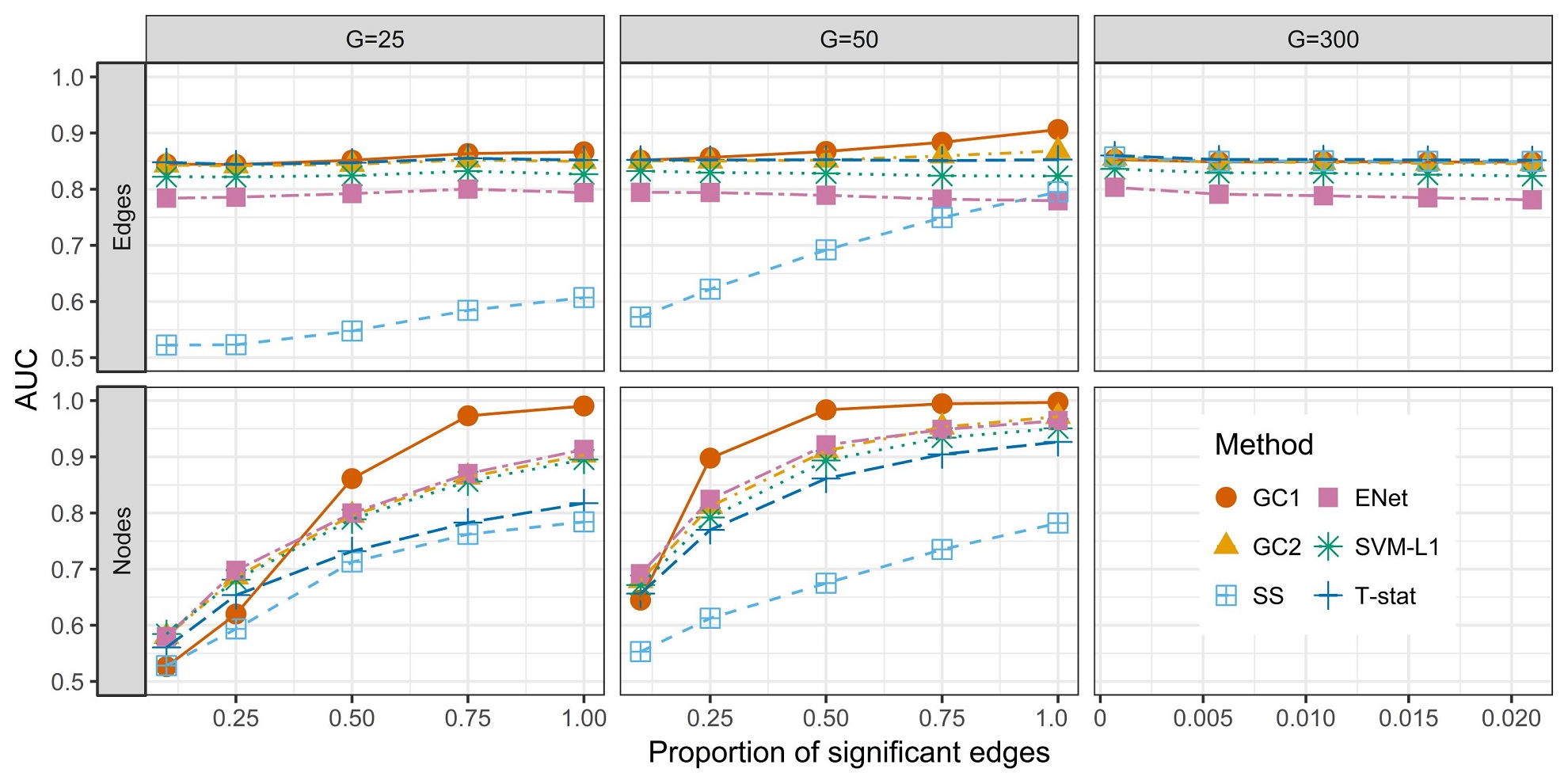}
	\caption{Variable selection performance of different methods in terms of edge AUC  (top) and node AUC (bottom) as a function of the fraction of differentiating edges in the subgraph induced by the active node set $\mathcal{G}$. As the proportion of active edges increases, methods that use network structure improve their performance when only a subset of the nodes is active. \label{fig:sims_auc}}
\end{figure}

\begin{figure}
	\centering
	\includegraphics[width=0.9\textwidth]{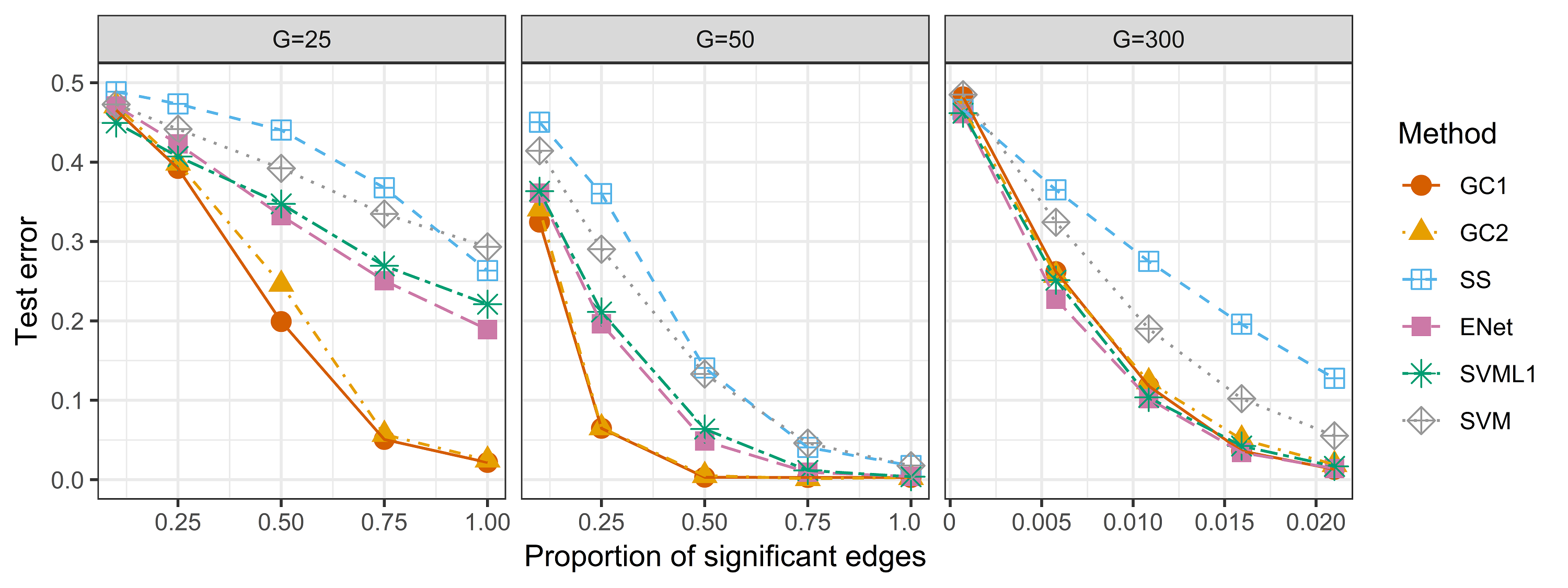}	
	\caption{Classification error of different methods as a function of the fraction of differentiating edges in the subgraph induced by the active node set $\mathcal{G}$. Our method is more accurate when only a subset of the nodes is active.  \label{fig:sims_cv}}
\end{figure}

Figure \ref{fig:sims_auc} shows the values of the average AUC for selecting edges (top) and nodes (bottom).   For $G=25$ and $50$, 
as the proportion of differentiating edges in the active subgraph increases, methods that take into account network structure (GC1, GC2 and SS) slightly improve their edge AUC, since enforcing node selection also results in better edge selection, while the edge AUC remains constant for unstructured methods (ENet, T-stat and SVML1).   On node selection, all methods improve the node AUC as the fraction of significant edges increases, but GC and SS have the largest gains. A similar trend is observed in classification error shown in Figure \ref{fig:sims_cv}. All methods improve as the proportion of differentiating edges increases, but our method has the best performance overall.  Our method performed the best with the larger value of $\lambda$ (GC1) on variable selection, particularly when the set of active nodes is smaller, but both values of $\lambda$ give very good classification performance.  In the last scenario ($G=300$), all nodes are active so the node AUC is undefined, and the node structure is not informative at all. Although the performance of our method is no longer the best, it performs comparably to state of the art methods that do not use network structure.

 In terms of computing time, since there are many contributing factors including the software choice for implementation and the tuning parameters, a fair comparison is difficult.    We can roughly say that elastic net is the fastest, taking about a minute to run a cross-validation instance, while our method takes about 10 minutes on average, and the  signal-subgraph takes more than an hour.
%%%%%%%%%%%%%%%%%%%%%%%%%%%%%%%%%%%%%%%%%%%%%%%%%%%%%%%%%%%%%%%%%%%%%%%%%%%%%%%%%%%%%%%%%%%%%%%%%%%%%%%%%%%%%%%%
\section{Application to schizophrenia data\label{sec:data}}
We analyze the performance of the classifier on the two different brain fMRI datasets previously described in Section \ref{sec:intro}. %, each containing schi\-zo\-phre\-nic patients and controls. The first dataset comes from the Center for Biomedical Research Excellence (COBRE). The second dataset, which we refer to as UMich data, is from the lab of Professor Stephan F. Taylor in the Department of Psychiatry at the University of Michigan. After performing pre-processing and registration steps on the raw images (see Appendix \ref{appendix:data} for a detailed description of these steps), we obtained a set of  
The code of our classifier and the processed connectomics datasets are available at \url{https://github.com/jesusdaniel/graphclass}.

\subsection{Classification results}

\begin{figure}
	\centering
	\begin{subfigure}[b]{0.32\textwidth}
		\includegraphics[width=\textwidth]{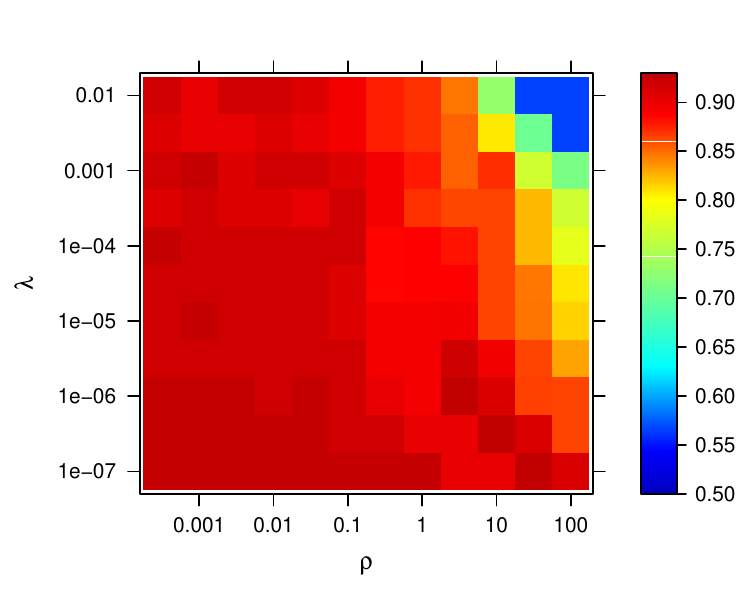}
		\caption{COBRE accuracy}
		\label{subfig:acc-cobre}
	\end{subfigure}
	\begin{subfigure}[b]{0.32\textwidth}
		\includegraphics[width=\textwidth]{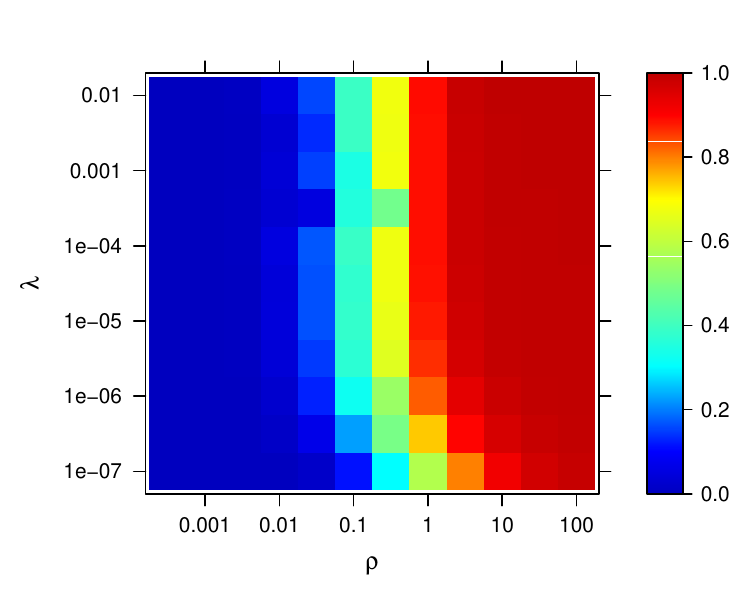}
		\caption{COBRE edge sparsity}
		\label{subfig:sp-cobre}
	\end{subfigure}
	\begin{subfigure}[b]{0.32\textwidth}
		\includegraphics[width=\textwidth]{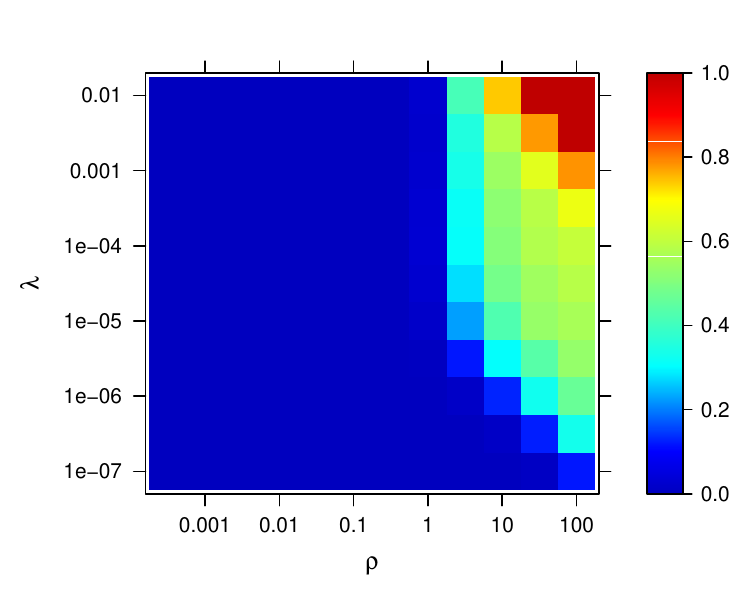}
		\caption{COBRE node sparsity}
		\label{subfig:ns-cobre}
	\end{subfigure}\\
	\begin{subfigure}[b]{0.32\textwidth}
		\includegraphics[width=\textwidth]{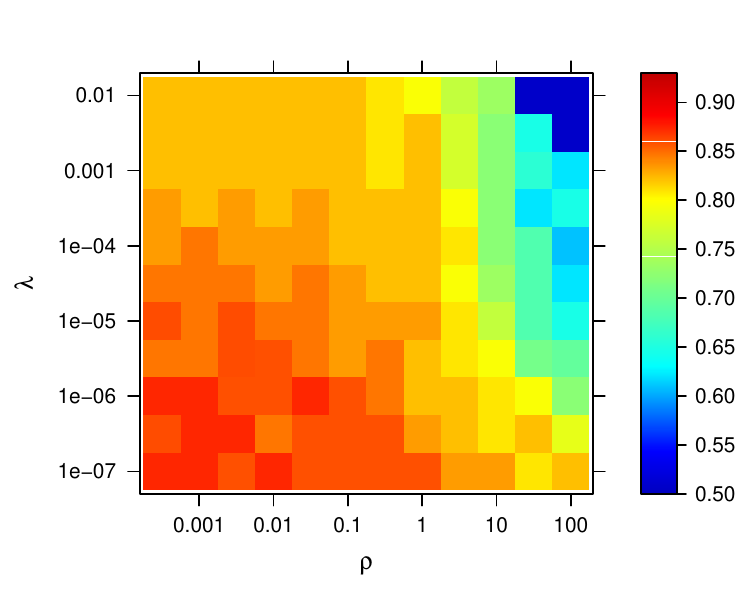}
		\caption{UMich accuracy}
		\label{subfig:acc-umich}
	\end{subfigure}
	\begin{subfigure}[b]{0.32\textwidth}
		\includegraphics[width=\textwidth]{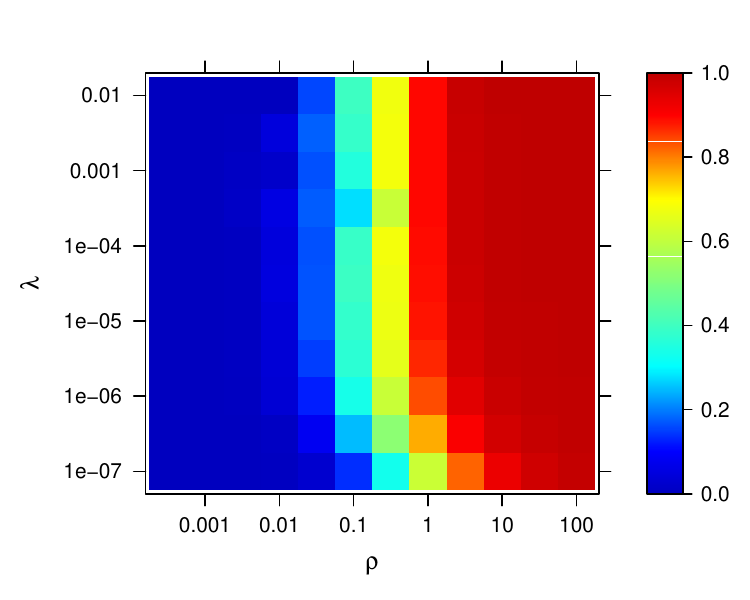}
		\caption{UMich edge sparsity}
		\label{subfig:sp-umich}
	\end{subfigure}
	\begin{subfigure}[b]{0.32\textwidth}
		\includegraphics[width=\textwidth]{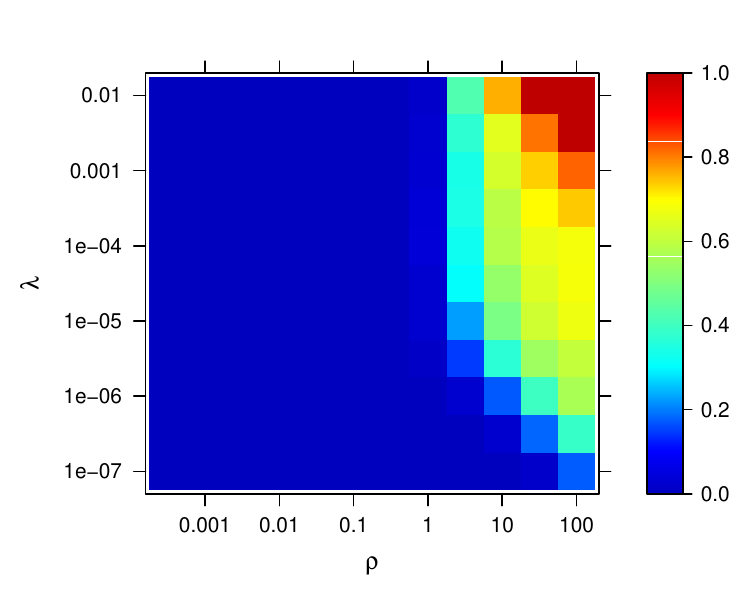}
		\caption{UMich node sparsity}
		\label{subfig:ns-umich}
	\end{subfigure}
	\caption{Cross-validated results for the two data sets.   Classification accuracy (left), fraction of zero edge coefficients (middle), and fraction of inactive nodes (right).\label{fig:cobre_average}}
\end{figure}

First, we evaluate our method's classification accuracy. We use a nested 10-fold cross-validation to choose tuning parameters and estimate the test accuracy. The classifier is trained for a range of values of $\lambda$ and $\rho$, with $\lambda\in \left\{10^{-7}, 10^{-6.5}, \ldots,  10^{-2}\right\}$ and $\rho\in \left\{10^{-3}, 10^{-2.5}, \ldots,  10^{2}\right\}$. The value of $\gamma$ in \eqref{eq:main_problem} is set to $10^{-5}$; we observed that setting $\gamma$ to a small value  speeds up convergence without affecting the accuracy or sparsity of the solution.   Figure \ref{fig:cobre_average} shows the average cross-validated accuracy, sparsity (fraction of zero coefficients) and node sparsity (fraction of inactive nodes), as a heat map over the grid of tuning parameter values.  We observe that  $\lambda$ has little influence on sparsity, which is primarily controlled by $\rho$. Moreover, as Proposition \ref{prop:error} suggests, values of $\rho < 1$ do not result in node selection.  As expected, accuracy generally decreases as the solution becomes sparser,  which is not uncommon in high-dimensional settings \citep{hastie2015statistical}. However, we can still achieve excellent accuracy with a substantially reduced set of features. In the COBRE dataset, the best accuracy is obtained with only 1886 edges (5.4\%) but almost all nodes are active (260).   On the UMich data, 29733 edges (85.6\%) achieve the best performance, and all nodes are active. Choosing parameters by cross-validation often tends to include too many noise variables \citep{meinshausen2007relaxed}, as we also observed in simulations. A commonly used technique to report solutions that still achieve good accuracy with a substantially reduced set of features is the so-called ``one-standard-error rule" \citep{hastie2015statistical}, in which one selects the most parsimonious classifier with cross-validation accuracy at most one standard error away from the best cross-validation accuracy.  Figure \ref{fig:fitted-solutions} shows the solutions for each dataset obtained by this rule. Nodes are ordered by brain systems (see Figure \ref{fig:powerparcellation}). The fitted solution for COBRE has  549 non-zero coefficients (1.56\%) and 217 active nodes (82.5\%), while the UMich solution has 11748 non-zero entries (33.8\%), and all nodes are active. Note that when many variables are selected, the magnitude of the coefficients becomes small due to the grouping effect of the penalty \citep{zou2005regularization}.

\begin{figure}
	\centering
	\begin{subfigure}[b]{0.48\textwidth}
		\includegraphics[width=\textwidth]{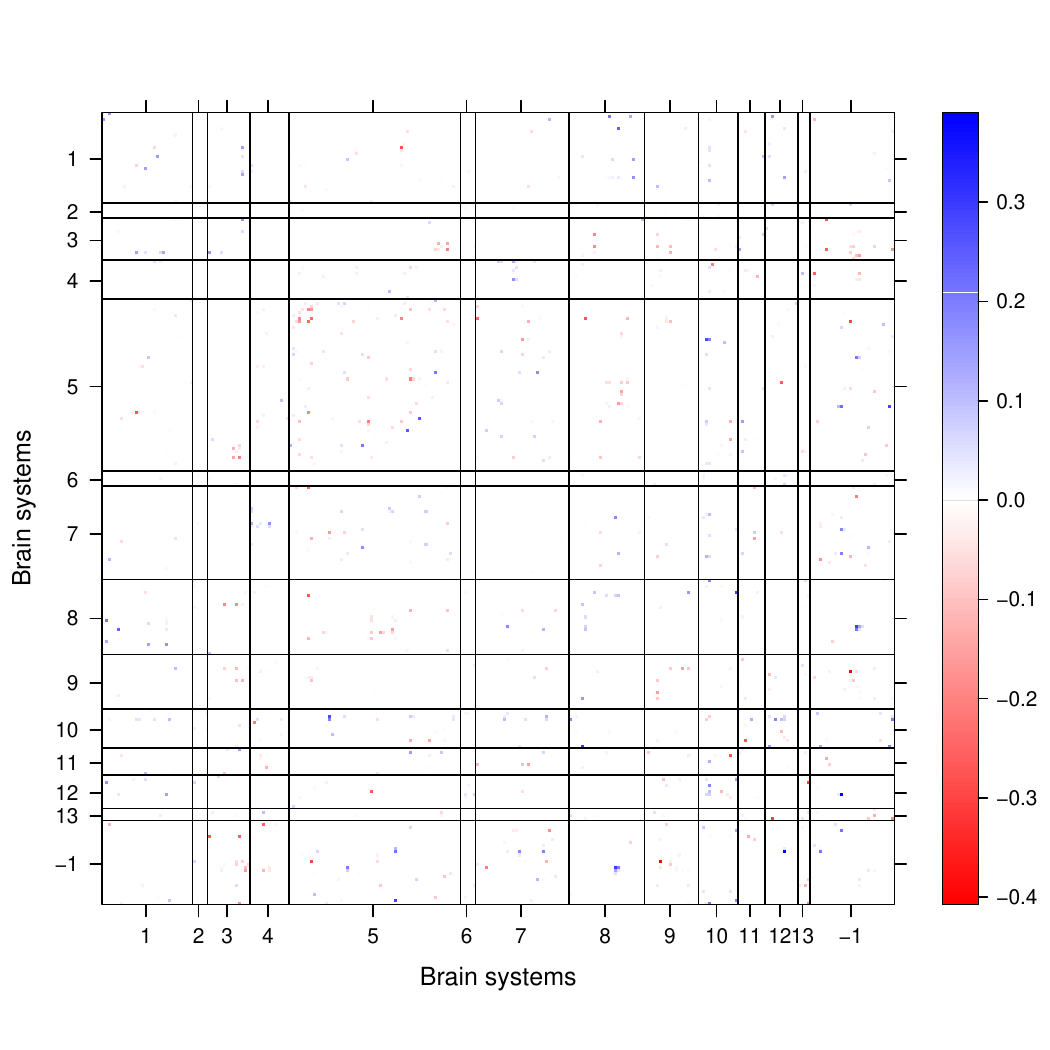}
		\caption{COBRE solution}
		\label{subfig:fitted-cobre}
	\end{subfigure}
	\begin{subfigure}[b]{0.48\textwidth}
		\includegraphics[width=\textwidth]{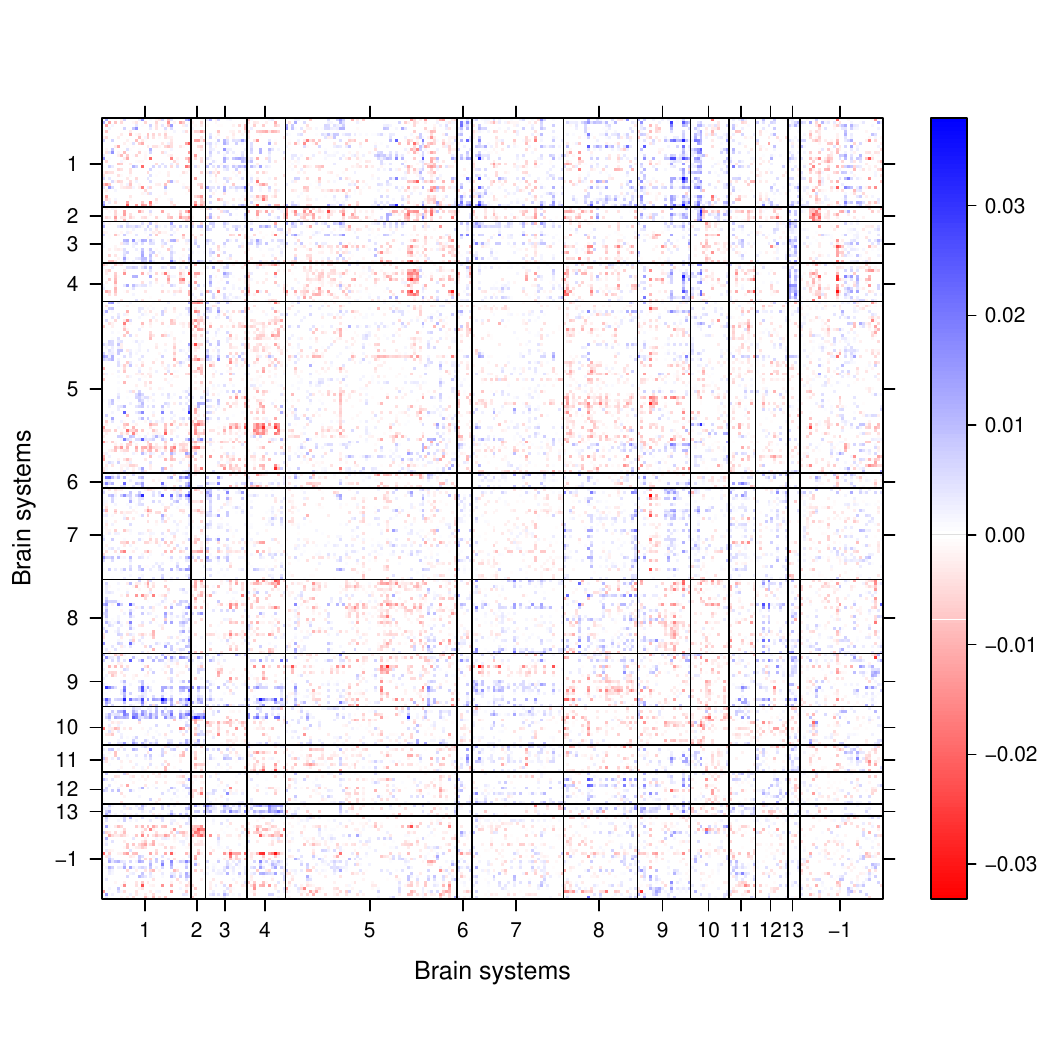}
		\caption{UMich solution}
		\label{subfig:fitted-umich}
	\end{subfigure}
	\caption{Fitted coefficients for COBRE and UMich datasets, with tuning parameters selected by the ``one standard error rule". Positive coefficients corresponds to higher edge weights for schizophrenic patients.%COBRE has 1092 \liza{Why is this number double the one in the text?  Check but no need to put this in the caption in any case.} non-zero edges and 217 active nodes, while UMich has 11748 non-zero edges and all nodes are active.
		\label{fig:fitted-solutions}}
\end{figure}

We also compared our method to benchmarks (Table \ref{tab:full_cv}), using the same methods as in the previous section and training and evaluating all methods using with the same nested 10-fold cross-validation. For SVM, we tested different kernels, including graph aware kernels \citep{gartner2003graph}, but in most cases local kernel methods were no better than random guessing.  We additionally included random forests  and a method based on global and local network summaries previously proposed as features for classifying brain data \citep{prasad2015brain}. For the latter, because our dataset is much larger, we only considered  global and node features proposed in \cite{prasad2015brain}, which resulted in about 30,000 features per individual, and omitted edge features.     \cite{watanabe2014disease} evaluated their classifiers on a different parcellation of the COBRE data, and we do not include their methods since they are based on the assumption of equally spaced nodes and cannot be directly applied to our data. Their reported accuracy of 71.9\% and 73.5\% for the COBRE data is substantially lower than our method, although the results are not directly comparable.

% we use a version of the method proposed in , in which several local and global network measures are computed and classification is done using a SVM. To compute these measures, we additionally constructed binary networks with a varying fraction of edges \cite{liu2008disrupted}, varying this fraction from 0.1 to 1 as in \cite{prasad2015brain}. Due to the much larger size of our networks, we restrict to real-valued and vector-valued network summaries,  resulting in a vector of size 29,120 for each individual. 

Results in Table \ref{tab:full_cv} show that most methods performed better on the COBRE dataset than on the UMich dataset, which can be partially explained by the different sample sizes and possibly noise levels.  Besides differences in sample size and demographic characteristics (Table \ref{table:patients}), the COBRE dataset is more homogeneous as it was collected using identical acquisition parameters, whereas the UMich dataset was pooled across five different experiments spanning seven years. 

Our method performs very well on both datasets, particularly among methods that  do variable selection. SVMs, which use the hinge loss, perform well too, and generally outperform methods using the logistic loss.   Our penalty can be combined with any loss, so we could also include our penalty combined with hinge loss which might potentially improve classification accuracy, but we do not pursue this direction, for two reasons:  one, our method is close to SVM + L1 as it is (better on COBRE, slightly worse on UMich but the difference is within noise levels), and because solutions based on logistic loss are generally considered more stable and preferable for variable selection \citep{hastie2015statistical}.   In Figure \ref{fig:dataVariableSelection}, we plot cross-validated classification accuracy of these methods as a function of the number of variables selected.  For the COBRE data, as we have observed before, good accuracy can be achieved with a fairly small number of edges, and the noisier UMich data requires more edges.   In all cases, our method uses fewer nodes than the others, as it is designed to do so.

\begin{table}
    {\footnotesize
	\centering
	\begin{tabular}{l c c}
		\hline & \multicolumn{2}{c}{\textbf{Classification accuracy \% (s.e.)}} \\ 
		\textbf{Method} &\textbf{COBRE }& \textbf{UMich } \\ 
		\hline 
		\textit{With variable selection} & & \\
		\textbf{Our method (GC)} & 92.7 (2.6) & 85.9 (3.6) \\ 
		Elastic net  & 89.5 (1.8) & 82.6 (4.7) \\
		SVM-L1 & 87.9 (2.2) & 86.2 (4.3)\\
		Signal-subgraph & 86.1 (3.3) & 82.4 (3.3)\\
		DLDA & 84.6 (3.3) & 73.4 (3.9)\\
		Lasso & 80.1 (5.6)  & 60.9 (5.6)  \\ 
		\textit{No variable selection} & & \\
		SVM & 93.5 (2.1) &  89.8 (2.5)\\
		Ridge penalty  & 91 (2.6) & 80.9 (3.5) \\
		Random forest & 74.2 (2.6) & 82.1 (3.9)\\
		Network summaries & 61.4 (3.1) & 65 (7.2)\\
		%Network summaries (svm radial)& 64.5 (3.8)& \\	
		\hline 
	\end{tabular} }
	\caption{Cross-validated accuracy (average and standard errors over 10 folds) for different methods.	\label{tab:full_cv} }
\end{table}

\begin{table}[ht]
	\centering
	{\footnotesize
	\begin{tabular}{l|rrr|rrr}
		\hline
		& \multicolumn{3}{c}{{COBRE}} &\multicolumn{3}{|c}{{UMich}}\\
		& {Edge} & {Systems} & {Coefficient} & {Edge} & {Systems}  & {Coefficient}\\ 
		\hline
		1 & (208, 85) &   (9, -1) & -0.187 & (110, 207) &   (5, 9) & -0.013\\
		2 & (260,  11) &  (12,  -1) & 0.183 & (255, 113) &   (1, 5) &  0.014\\ 
		3 & (194, 140) &   (8, -1) & 0.136 & (33, 218) &   (1, 9) & 0.016\\ 
		4 & (52, 186) &   (3, 8) & - 0.1 & (46, 225) &   (2, 10) &  0.013\\  
		5 & (160, 239) &  (7, 11) & -0.082 & (43,  90) &   (2, 5) & -0.013\\ 
		6 & (120, 116) &   (5, 5) & 0.099 & (23, 225) &   (1, 10) &  0.012\\ 
		7 & (57, 129) &   (3, 5) & -0.128 &  (66, 118) &   (4,  5) & -0.013\\
		8 & (24, 114) &   (1, 5) & -0.148 & (26, 145) &   (1, 7) &  0.013\\
		9 & (81, 179) &   (5, 8)  & -0.129  & (186, 254) &   (8, -1) &  0.012\\ 
		10 & (193, 140) &   (8, -1) & 0.153  &  (15, 134) &   (1, 6) &  0.011 \\ 
		11 & (178, 234) &  (8, 10)  & 0.146 & (76, 207) &   (5, 9) &  -0.012\\ 
		12 & (18, 194) &   (1, 8)  &  0.116  & (65, 84) &   (4, -1) &  -0.012\\  
		13 & (215, 207) &   (9, 9) & -0.076  &   (26, 122) &   (1, 5) & 0.012 \\
		14 & (90, 224) &  (5, 10)  & 0.123 & (33, 145) &   (1, 7) &  0.012\\ 
		15 & (112, 253) &  (5, -1) &  0.136 & (36, 224) &   (1, 10)  & 0.011 \\ 
		\hline
	\end{tabular}}
	\caption{\label{table:data-stabselect} Edges with the top 15 largest selection probabilities from stability selection. The first column shows the pair of nodes making the edge,  the second column the brain systems the nodes belong to in the Power parcellation, and the third column the fitted coefficient of the edge}
	%\caption{Table, cobre SS  6.059917e+00, umich ss 9.701748e+00}
\end{table}

\begin{table}
	\centering
	{\footnotesize
	\begin{tabular}{l | c c}
		\hline & \multicolumn{2}{c}{{Test data}} \\ 
		{Training data} &{COBRE }& {UMich } \\ 
		\hline 
		{COBRE} & 92.7 (2.6) & 73.5 (3.4)\\
		{UMich} &  78.3 (3.0) &  85.9 (3.6)\\ 
		\hline 
	\end{tabular} }
	\caption{Classification accuracy (cross-validation average and standard error) of the classifier fitted on one dataset and evaluated on the other. The intercept (the mean) is fitted on the test data  and the accuracy is estimated using 10-fold cross-validation on the test data. \label{tab:cv_bothdatasets} }
\end{table}

\begin{figure}
	\centering
	\includegraphics[width=0.9\textwidth]{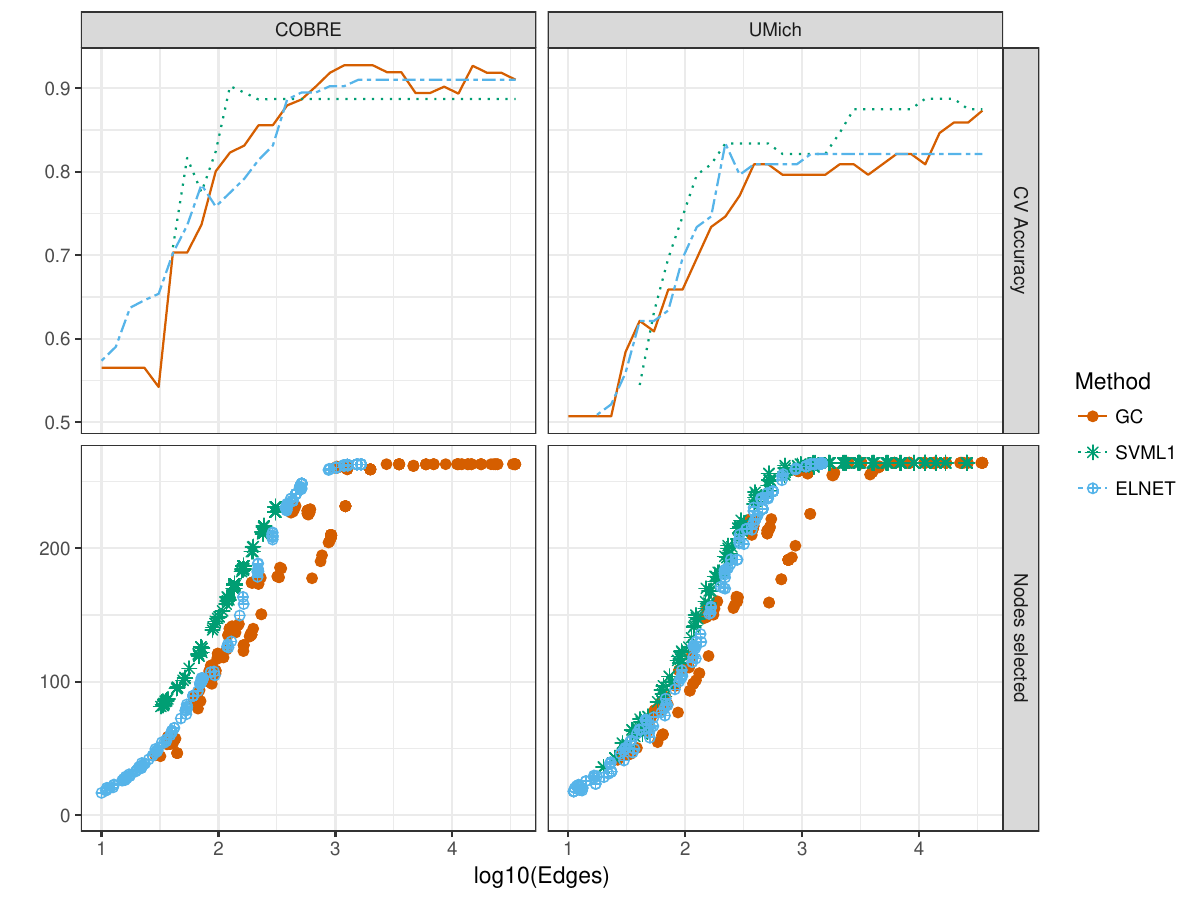}
	\caption{Cross-validated accuracy and number of nodes selected as a function of the number of edges used.} 
	\label{fig:dataVariableSelection}
\end{figure}

Ultimately, assessing significance of the selected variables is necessary, which is in general a difficult task in high-dimensional settings and an active area of research (see for example \cite{meinshausen2010stability,van2014asymptotically,lockhart2014significance,lee2016exact}). In brain connectomics, it is particularly challenging to identify significant variables because of small sample sizes \citep{button2013power}.   Here we employ \emph{stability selection} \citep{meinshausen2010stability} which can be shown to control a type of false discovery rate by employing many rounds of random data splitting and calculating the probability of each variable being selected.  Some versions of this method have been theoretically studied, and upper bounds on the expected number of variables with a low selection probability that are included in the final solution (i.e., errors) have been derived under mild conditions \citep{meinshausen2010stability,shah2013variable}.    We implemented the version of stability selection proposed by \cite{shah2013variable}, with values of $\lambda$ and $\rho$ obtained by cross-validation on the COBRE data, and by the ``one standard error rule" on the UMich dataset, since stability selection is most relevant to sparse solutions.   However, one of the advantages of stability selection is that it is not sensitive to the initial choice of tuning parameters, and changing tuning parameters  only slightly alters the ordering of variables with the largest selection probabilities.

The edges with the 15 largest selection probabilities are reported in Table \ref{table:data-stabselect}. 
Using the results of \cite{shah2013variable} (equation 8), we estimated that the expected number of falsely selected variables (variables with a probability of selection smaller than the estimated) is bounded by 6.1 for the COBRE dataset and 9.7 for the UMich data,  which also suggests that results on the UMich data might be less reliable. While the two datasets yield somewhat different patterns of edge selection, it is notable that the default mode network (5)  was often selected in both.   This network has been consistently implicated in schizophrenia \citep{whitfield2009hyperactivity,ongur2010default,peeters2015default}, as well as other psychiatric disorders, possibly as a general marker of psychopathology \citep{broyd2009default,menon2011large}. In the COBRE dataset, edges were also selected from the fronto-parietal task control region (8), previously
linked to schizophrenia \citep{bunney2000evidence,fornito2012schizophrenia}. These results coincide with the findings of \cite{watanabe2014disease} on a different parcellation of the same data, which is an encouraging indication of robustness to the exact choice of node locations. Some of the variables with the highest estimated selection probabilities appear in the uncertain system (-1), in particular in the cell connecting it with salience system (9), which suggests that alternative parcellations that better characterize these regions may offer a better account of the schizophrenia-related changes.  
Additionally, sensory/somatomotor hand region (1) and salience system (9) also stand out in the UMich data, and these are networks that have also been implicated in schizophrenia \citep{dong2017dysfunction}.

While results in Table \ref{table:data-stabselect} do not fully coincide on the two datasets, there are clear commonalities.     Table \ref{tab:cv_bothdatasets} compares classification accuracy when the classifier is trained on one dataset and tested on the other (with the exception of the intercept, since the datasets are not centered in the same way, which is fitted on a part of the test data, and the test error is then computed via 10-fold cross-validation).     While the accuracy is lower than when the same dataset is used for training and testing, as one would expect, it is still reasonably good and in fact better than some of the benchmark methods even when they train and test on the same data.  We again observe that the COBRE dataset is easier to classify.

Figure \ref{fig:selectednodes} shows the active nodes in the COBRE dataset (marked in green), corresponding to the endpoints of  the edges listed in Table \ref{table:data-stabselect}.   We also identified a set of 25 nodes that are not selected in any of the sparse solutions with cross-validation accuracy within one standard error from the best solution (marked in purple).   These consistently inactive nodes are mostly clustered in two anatomically coherent regions.

\begin{figure}
	\centering
	\includegraphics[width=0.25\textwidth]{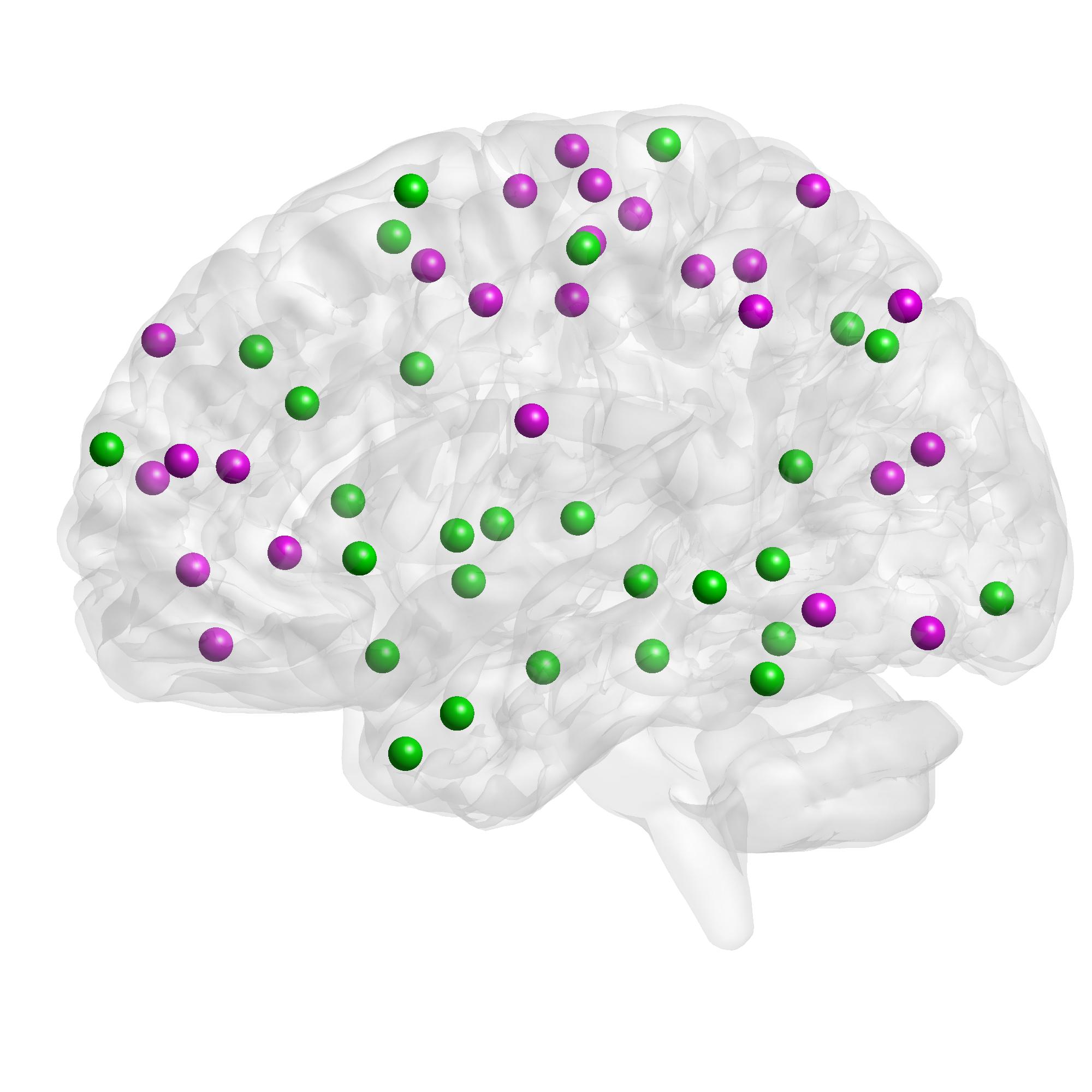}
	\includegraphics[width=0.25\textwidth]{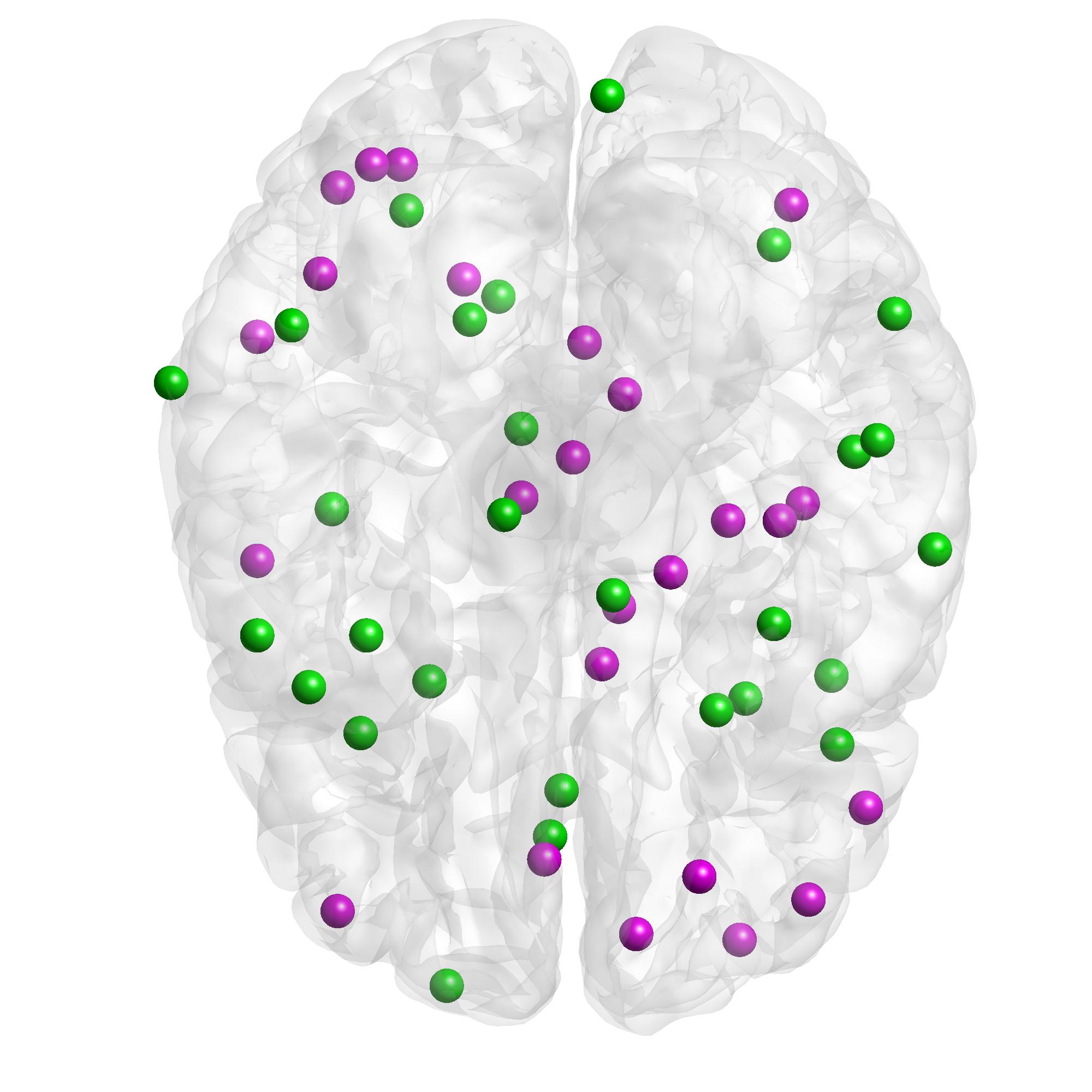}
	\includegraphics[width=0.25\textwidth]{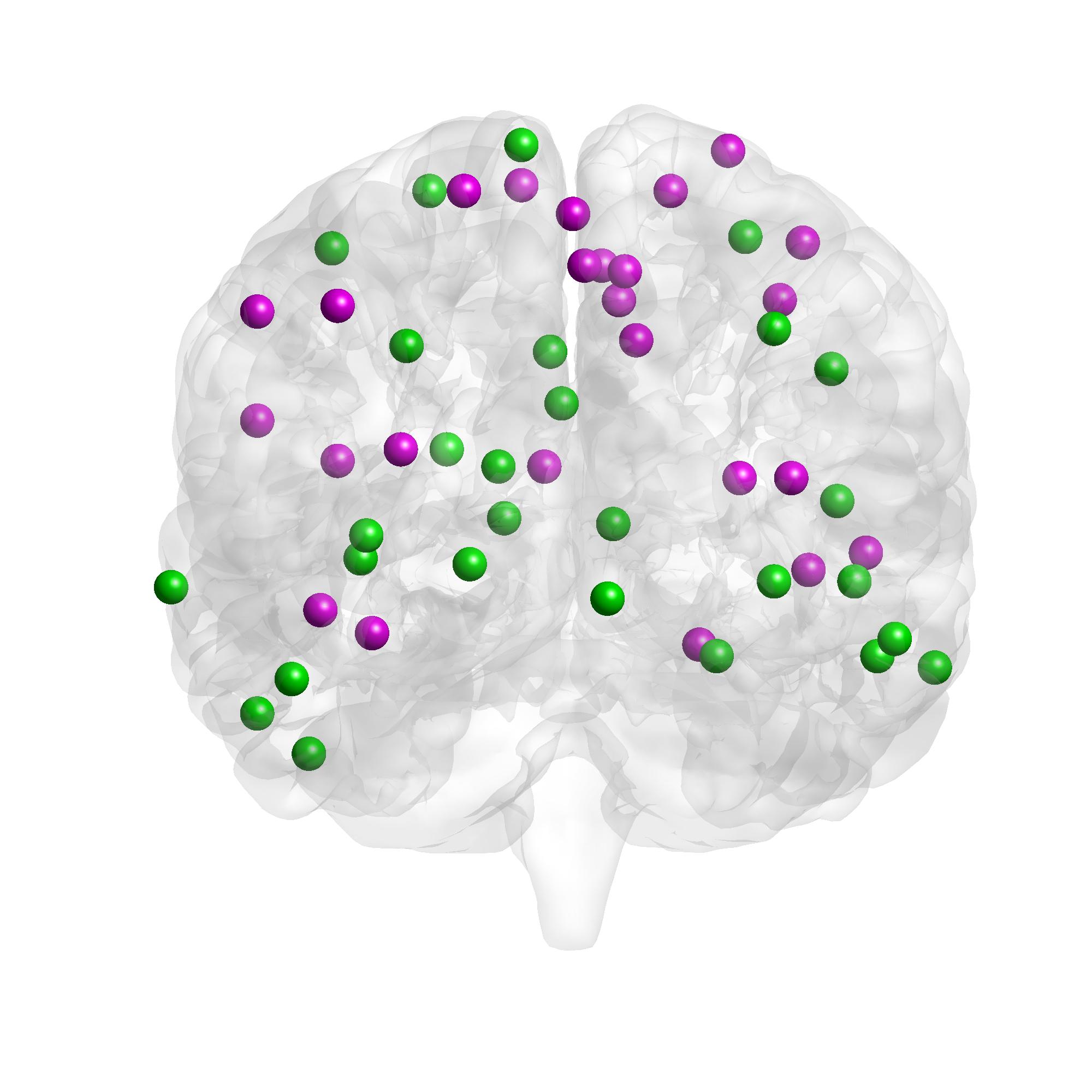}
	\caption{Nodes shown in green are endpoints of edges selected by stability selection shown in Table \ref{table:data-stabselect}.    Node shown in purple are nodes not selected by any of the sparse solutions within one standard error of the most accurate solution.  \label{fig:selectednodes}}
\end{figure}

\section{Discussion\label{sec:disc}}
We have presented a method for classifying graphs with labeled nodes, motivated by brain connectomics but generally applicable to any setting with such graphs.   The distinct feature of our method is that it is graph-aware, aiming to select a sparse set of {\em both} edges and nodes, but it is general in the sense that it does not rely on the spatial structure of the brain.   The method is computationally efficient since the regularization we use is convex, and the solution is implemented with efficient optimization algorithms. These properties guarantee fast  convergence to the solution, making the methods scalable to networks with thousands of nodes, which is enough to deal with many of the brain atlases usually employed in neuroimaging (see for example \cite{kiar2018high}). Statistically, the rate of convergence depends on the number of active nodes only, not the total number of nodes, which allows for accurate results with even moderate sample sizes if the active node set is small.

The results we obtained on the schizophrenia data are generally in agreement with previous studies.  In particular, the default mode network has been consistently implicated in schizophrenia and many other psychiatric disorders \citep{ongur2010default,broyd2009default}.     While different subnetworks were implicated by the two different datasets, we are still able to predict the disease status fairly accurately by training on one dataset and testing on the other.   The differences between the two datasets may reflect real differences in samples collected at different sites and in different experiments, as significant pathophysiological heterogeneity occurs for all psychiatric diagnoses, or they may simply reflect type 2 errors.

Our methods work very generally with a sample of networks with labeled nodes and associated responses.   The many pre-processing steps inevitable when dealing with fMRI data always  add some uncertainty,  and pre-processing decisions can potentially affect downstream conclusions.    We aimed to somewhat mitigate this  by using ranks, which are more robust and showed a slightly better performance on our datasets.  Another option, when practical, is to compare multiple pre-processing pipelines, and/or multiple measures of connectivity, to further validate results.   Our method's independence of these particular choices and its computational efficiency make it an attractive option for such comparisons.

\appendix

\section*{Acknowledgments}

This research was supported in part by NSF grant DMS-1521551, ONR grant N000141612910, and a Dana Foundation grant to E. Levina, NSF training grant DMS-1646108 support for D. Kessler,  as well as by computational resources and services provided by Advanced Research Computing at the University of Michigan, Ann Arbor.   S. F. Taylor's research  is supported by the National Institute of Mental Health (R01MH064148, R21MH086701, R21MH101676), the Boledovich Schizophrenia Research Fund and University of Michigan Clinical Translational Science Award (UL1RR024986).

\bibliographystyle{apalike}
\bibliography{Biblio}

\appendix

\section{Optimization algorithm details}
The optimization procedure for solving the penalized prediction problem introduced above consists in a proximal algorithm, and the steps are detailed in Algorithm \ref{alg:prox}. Each step requires to solve a further convex optimization problem via ADMM. The exact solution of the steps of this method is shown in Algorithm \ref{alg:admm}.

\label{appendix:optimization}
\begin{algorithm}
	\caption{Proximal algorithm for fitting graph classifier}
	\begin{algorithmic} 
		\Input Training sample $\{(A^{(1)},Y_1),\ldots,(A^{(n)},Y_n)\}$; regularization parameters $\lambda$, $\rho$; step size constants $\alpha\geq 1, \delta\in(0,1), \eta>0$; tolerance $\epsilon^{\text{\PROX}}>0$.
		\Initialize  Starting values  $B^{(0)}$, $t^{(0)}$.
		\Iterate for $k=1,2,\ldots$ until $\epsilon^{(k)} < \epsilon^{\text{\PROX}})$
		\begin{enumerate}
			\item Compute $W^{(k)}$ according to \eqref{eq:proximalalg_direction}.
			\item Compute $B^{(k)}$ by solving the proximal operator \eqref{eq:proximal_signalapprox}. \label{step:proxoperator}
			\item If condition \eqref{eq:backtrackingcondition} does not hold, decrease step size $t^{(k)} \leftarrow\delta t^{(k)}$ and return to \ref{step:proxoperator}.
			\item Calculate loss improvement   $\epsilon^{(k)} = \left\{\ell(B^{(k-1)}) + \Omega(B^{(k-1)})\right\}-\left\{\ell(B^{(k)}) + \Omega(B^{(k)})\right\}$ . 
			\item If $|\epsilon^{(k)}-\epsilon^{(k-1)}|/\epsilon^{(k)} < \eta$, increase step size $t^{(k+1)}=\alpha t^{(k)}$, otherwise set $t^{(k+1)} = t^{(k)}$.
		\end{enumerate} 	
		\Output $\hat{B}=B^{(k)}$.
	\end{algorithmic}
	\label{alg:prox}
\end{algorithm}

\begin{algorithm}
	\caption{Proximal operator by ADMM}
	\begin{algorithmic} 
		\Input $Z$, $\epsilon^{\ADMM}$, $\mu$.
		\Initialize $\tilde B^{(0)}=Z$, $R^{(0)}=Z$, $Q^{(0)}=Z$, $U^{(0)}=0_{N\times N}$, $V^{(0)}=0_{N\times N}$.
		\Iterate for $l=0,1,2,\ldots$ until convergence ($\epsilon^{(\counter)}_{\text{\ADMM-p}}<\epsilon^{\ADMM}$ and $\epsilon^{(\counter)}_{\text{\ADMM-d}}<\epsilon^{\ADMM}$)
		\begin{enumerate}
			\item Perform coordinate gradient descent on \eqref{eq:lagrangian} by computing 
			\begin{align*}
				\tilde{B}^{(l+1)} & =  \frac{1}{1+2\mu}\left\{Z + \frac{\mu}{2}\left(Q^{(l)}+Q^{(l)^T}\right) + \mu R^{(l)} - U^{(l)} -V^{(\counter-1)}\right\},\\
				Q^{(l+1)}_{(i)} & =  \left(1-\frac{t\lambda}{\mu\left\|\tilde{B}^{(l+1)}_{(i)} + \frac{1}{\mu}U^{(l)}_{(i)}\right\|_2}\right)_+\left(\tilde{B}^{(l+1)}_{(i)} + \frac{1}{\mu}U^{(l)}_{(i)}\right),\  i\in[N],\\
				R^{(l+1)}_{ij} & =  \left(1- \frac{t\lambda \rho}{\mu\left|\tilde{B}^{(l+1)}_{ij} + \frac{1}{\mu}V^{(l)}_{ij}\right| }\right)_+(\tilde{B}^{(l+1)}_{ij} + \frac{1}{\mu}V^{(l)}_{ij}), \  i,j\in[N],\\
				U^{(l+1)} & =  U^{(l)} + \mu\left\{\tilde{B}^{(l+1)}-\frac{1}{2}\left(Q^{(l+1)}+Q^{(l+1)^T}\right)\right\},\\
				V^{(l+1)} & = V^{(l)} + \mu\left(\tilde{B}^{(l+1)}-R^{(l+1)}\right).
			\end{align*}
			\item Update primal and dual residuals $\epsilon^{(\counter)}_{\text{\ADMM-p}}$ and $\epsilon^{(\counter)}_{\text{\ADMM-d}}$ 
			\begin{align*}
				\epsilon^{(l+1)}_{\text{\ADMM-p}} & =  \mu\left(\|Q^{(l+1)}-Q^{(l)}\|_\infty+\|R^{(l+1)}-R^{(l)}\|_\infty\right),\\
				\epsilon^{(l+1)}_{\text{\ADMM-d}} & =  \mu\left(\|\tilde{B}^{(l+1)}-Q^{(l+1)}\|_2+\|\tilde{B}^{(l+1)}-R^{(l+1)}\|_2\right).
			\end{align*}
		\end{enumerate}
		\Output $\tilde{B}=\tilde{B}^{(l+1)}$.
	\end{algorithmic}
	\label{alg:admm}
\end{algorithm}

\section{Proofs}
\label{appendix:theory}

Here we prove the bounds on Frobenius norm error and probability of support selection in Proposition \ref{prop:error}, following the framework of \cite{lee2015model} based on geometrical decomposability. A penalty $\Omega$ is geometrically decomposable if it can be written as \[\Omega(B)=h_A(B)+h_I(B) + h_{E^\perp}(B)\]
for all $B$, with $A,I$ closed convex sets, $E$ a subspace, and $h_C$ the support function on $C$ defined as $h_C\left(B\right)=\sup\left\{\left\langle Y,B\right\rangle  \left| Y\in C\right.\right\}$ .

The proof proceeds in the following steps.  
Lemma \ref{lemma:geom_decomp} shows that  an equivalent form of our penalty \eqref{eq:GL_penalty} is geometrically decomposable, allowing us to use the framework of \cite{lee2015model}.
Lemma \ref{lemma:probabilitybound}  shows the Assumption \ref{assump:irrepresentability} together with a lower bound on $\rho$ imply that the irrepresentability assumption of \cite{lee2015model} holds.  Assumption \ref{assump:irrepresentability} is directly on the entries of the loss Hessian, which simplifies the very general form of the assumption in \cite{lee2015model}.  Lemma \ref{lemma:probabilitybound} gives a bound on the entries of the loss gradient under the sub-gaussianity assumption \ref{assump:subgaussianity}.   Lemma \ref{lemma:compconsts} gives explicit bounds for the compatibility constants that appear on Theorem 1 of \cite{lee2015model}.   Finally, we combine these results to prove Proposition \ref{prop:error}.

Without loss of generality, to simplify notation we assume that $\mathcal{G}=\{1,\ldots,G\}$, that is, the active subgraph is in the first $G$ rows of the matrix. 
\begin{lemma}
	\label{lemma:geom_decomp}
	The penalty \eqref{eq:GL_penalty} can be written as geometrically decomposable.  
\end{lemma}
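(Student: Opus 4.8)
The plan is to exhibit explicit sets $A$, $I$ and a subspace $E$ such that $h_A+h_I+h_{E^\perp}$ agrees with the penalty \eqref{eq:GL_penalty} on $\Bset$ and takes the value $+\infty$ off $\Bset$; since problem \eqref{eq:main_problem} is constrained to $\Bset$, this produces an \emph{equivalent} objective whose penalty is geometrically decomposable in the sense of the definition above. The one conceptual point that makes this possible is that the overlap among the row groups $\{B_{(i)}\}$ is an artifact of the symmetry constraint: on all of $\real^{N\times N}$ the groups $G_i=\{(i,1),\dots,(i,N)\}$ partition the coordinates, so $\sum_i\|B_{(i)}\|_2$ is there an ordinary non-overlapping group-lasso norm. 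I would therefore absorb the constraints $B=B^T$ and $\diag(B)=0$ into the subspace term and decompose the remaining part by the standard group-lasso/lasso recipe.

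Concretely, take $E=\Bset$, a linear subspace whose orthogonal complement (in the trace inner product) consists of the antisymmetric and diagonal matrices, so that $h_{E^\perp}(B)=\sup\{\langle Y,B\rangle:Y\in E^\perp\}$ equals $0$ for $B\in\Bset$ and $+\infty$ otherwise, i.e.\ it is the convex indicator of $\Bset$. For the two norms I would split the dual balls along the active set $\mathcal{G}$: let $A$ be the Minkowski sum of $\{Y:Y_{(i)}=0\text{ for }i\notin\mathcal{G},\ \max_{i\in\mathcal{G}}\|Y_{(i)}\|_2\le\lambda\}$ and $\{Y:Y_{ij}=0\text{ unless }i,j\in\mathcal{G},\ \|Y\|_\infty\le\lambda\rho\}$, and let $I$ be the Minkowski sum of $\{Y:Y_{(i)}=0\text{ for }i\in\mathcal{G},\ \max_{i\in\mathcal{G}^C}\|Y_{(i)}\|_2\le\lambda\}$ and $\{Y:Y_{ij}=0\text{ whenever }i,j\in\mathcal{G},\ \|Y\|_\infty\le\lambda\rho\}$. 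Both are compact convex. Using that the support function of a Minkowski sum is the sum of the support functions, that the support function of an $\ell_2$-ball on a block of rows is the corresponding block group norm, and that the support function of an $\ell_\infty$-ball on a block of entries is the $\ell_1$ norm on that block, one gets $h_A(B)=\lambda\sum_{i\in\mathcal{G}}\|B_{(i)}\|_2+\lambda\rho\sum_{i,j\in\mathcal{G}}|B_{ij}|$ and $h_I(B)=\lambda\sum_{i\in\mathcal{G}^C}\|B_{(i)}\|_2+\lambda\rho\sum_{(i,j)\notin\mathcal{G}\times\mathcal{G}}|B_{ij}|$ for every $B$; their sum is $\lambda(\sum_{i=1}^N\|B_{(i)}\|_2+\rho\|B\|_1)=\Omega_{\lambda,\rho}(B)$. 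Hence $h_A(B)+h_I(B)+h_{E^\perp}(B)$ equals $\Omega_{\lambda,\rho}(B)$ on $\Bset$ and $+\infty$ elsewhere, which is the asserted geometrically decomposable form.

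I expect the remaining work — verifying those support-function identities and checking that $A$, $I$ are closed convex and $E$ is a subspace — to be routine; the genuine obstacle is the conceptual one just flagged, and it is also the reason the statement is phrased through an ``equivalent form'' rather than a direct decomposition of \eqref{eq:GL_penalty} on $\Bset$. Namely, the group-lasso term does \emph{not} decompose relative to the subspace $\mathcal{M}$ of \eqref{eq:m_activeset}: a row group with $i\in\mathcal{G}$ straddles both active columns ($\mathcal{G}$) and inactive columns ($\mathcal{G}^C$), and the map $v\mapsto\|v\|_2-\|v_{\mathcal{G}}\|_2$ is not sublinear, so no convex $I$ supported off $\mathcal{M}$ can reproduce the leftover. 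Pushing the symmetry into $E$ and decomposing on $\real^{N\times N}$ with the row partition avoids this, at the price that $\spann(A)$ is the space of matrices with row support in $\mathcal{G}$, a strict superspace of $\mathcal{M}$; the sharper support statement $\hat{\mathcal{G}}\subseteq\mathcal{G}$ then has to be recovered later from the $\ell_1$ component of $I$ acting together with the symmetry encoded in $E$, but that belongs to the subsequent lemmas rather than to this one.
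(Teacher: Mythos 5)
Your construction is correct, and it establishes the lemma by a genuinely different (and in some ways cleaner) route than the paper. You stay in $\real^{N\times N}$, put both constraints of $\Bset$ into the subspace $E$ so that $h_{E^\perp}$ is the indicator of $\Bset$, and split the (now non-overlapping) row-group norm and the $\ell_1$ norm into active/inactive Minkowski-sum dual balls along $\mathcal{G}$; the support-function identities you invoke are standard, $A$ and $I$ are closed and convex, and $h_A+h_I$ reproduces $\lambda\left(\sum_i\|B_{(i)}\|_2+\rho\|B\|_1\right)$ exactly, including the double-counted $\ell_1$ term of \eqref{eq:GL_penalty}. The fact that your decomposition represents $\Omega_{\lambda,\rho}+\mathbf{1}_{\Bset}$ rather than $\Omega_{\lambda,\rho}$ on all of $\real^{N\times N}$ is not a gap: encoding the constraint of \eqref{eq:main_problem} through the $h_{E^\perp}$ term is precisely what that term is for in \cite{lee2015model}, and the paper's own proof takes the same license (its $h_E$ term is likewise an indicator). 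The paper instead lifts to $\real^{N\times 2N}$ via a replication operator, keeping the full symmetric matrix for the group term and a triangular copy for the lasso term, with $E$ enforcing both the symmetry and the coupling between copies; what that extra bookkeeping buys is that every edge is penalized once by the $\ell_1$ part and that the gauge $\gamma_I$, the function $V$, the irrepresentability bound of Lemma \ref{lemma:irrepresent} (whence the lower bound $\rho>1/\tau-1/\sqrt{G}$), and the compatibility constants of Lemma \ref{lemma:compconsts} are all computed in that lifted coordinate system. Your decomposition yields the same conclusion for this lemma, and your $E\cap\spann(I)^\perp$ recovers the model subspace \eqref{eq:m_activeset} just as theirs does, but if it were adopted the downstream lemmas would have to be recomputed with your $A$ and $I$ (same arguments in spirit, with constants shifted, e.g.\ a factor of two in the effective $\rho$ from the double-counted $\ell_1$ part). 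Your closing observation — that no decomposition with $\spann(A)\subseteq\mathcal{M}$ is possible because active rows straddle inactive columns, so the extra term $\sup_{Z\in A}V(P_{M^\perp}Z)$ must be absorbed later via the lower bound on $\rho$ — is accurate and is exactly how the paper handles it in Lemma \ref{lemma:irrepresent}.
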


\begin{proof}[Proof of Lemma \ref{lemma:geom_decomp}]   We use an equivalent formulation of the penalty in which every coefficient is penalized only once. Let $B',B''\in\real^{N\times N}$ be matrices such that the upper triangular part of $B''$ and the diagonals of $B'$ and $B''$ are zero. Define
	\begin{equation*}
		\tilde\Omega(B',B'') = \sum_{i=1}^N\|B'_{(i)}\|_2 + \rho\|B''\|_1,
	\end{equation*}
	and $E = \{(B',B'')\in\real^{N\times 2N}\ :  \ B' = B'^{T}, B''_{ij}=B'_{ij}, \text{ for }i<j\text{ and }B''_{ij}=0\text{ for }i\geq j\}$.
	Denote by $R$ the transformation from $\real^{N\times N}$ to $\real^{N\times 2N}$ that replicates entries appropriately, 
	\begin{equation}
		(RB)_{ij} = \left\{ \begin{array}{cl}
			B_{ij}& \text{ if $1\leq j\leq N$}\\
			B_{i(j-N)}&\text{ if $j> N$}.\\
		\end{array}\right.
		\label{eq:repeating_operator}
	\end{equation}
	Therefore, for any $B\in\real^{N\times N}$, we can uniquely define $RB = (B',B'')$ such that $\Omega(B)=\tilde\Omega(B',B'')$. 
	We then show that $\tilde\Omega$ is geometrically decomposable. Moreover, for any $(B',B'')\in E$ we can define $R^{-1}$, so the penalties $\Omega$ and $\tilde\Omega$ on $E$ are equivalent. Define the sets $A,I\subset\real^{N\times 2N}$ such that
	\begin{align*}
		A =&  \left\{(B',B'')  :   \ \max_{i\in\mathcal{G}}\|B_{(i)}'\|_2 \leq 1, \max_{i\in\mathcal{G}^C}\|B_{(i)}'\|_2=0, \right. \\
		& \left. \ \max |B''_{ij}| \leq \rho, B_{ij}'' = 0,(i,j)\in\left(\mathcal{G}\times\mathcal{G}\right)^C \right\},\\
		I = &  \left\{(B',B'')  : \  \max_{i\in\mathcal{G}^C}\|B_{(i)}'\|_2 \leq 1, \max_{i\in\mathcal{G}}\|B_{(i)}'\|_2=0,\right.\\
		& \left. \  \max |B''_{ij}| \leq \rho,  B_{ij}'' = 0,(i,j)\in\mathcal{G}\times\mathcal{G} \right\} .
	\end{align*}		
	Letting $\left\langle Y,(B',B'')\right\rangle=\text{Tr}(Y'B'^T) + \text{Tr}(Y''B''^T)$, combining the arguments of \cite{lee2015model} for lasso and group lasso penalties, 
	\begin{align*}
		h_A(B',B'') =  & \sum_{i\in\mathcal{G}}\|B'_{(i)}\|_2 + \rho\sum_{(i,j)\in\mathcal{G}\times\mathcal{G}}|B''_{ij}|,\\
		h_I(B',B'') = & \sum_{i\in\mathcal{G}^C}\|B'_{(i)}\|_2 + \rho\sum_{(i,j)\in\left(\mathcal{G}\times\mathcal{G}\right)^C}|B''_{ij}|,\\
		h_E(B',B'') = & \left\{\begin{array}{cl}
			0 & \text{if }(B',B'')\in E,\\
			\infty & \text{otherwise}.
		\end{array}\right.
	\end{align*}
	Hence,  $\Omega$ can be written as a geometrically decomposable penalty 
	$$\Omega(B) = \tilde\Omega(B',B'') = \lambda\{h_A(B',B'')+h_I(B',B'')+h_E(B',B'')\}.$$
\end{proof}
We introduce some notation in order to state the irrepresentability condition of \cite{lee2015model}. For a set $F\subset\real^{N\times 2N}$ and $Y\in\real^{N\times 2N}$, denote by $\gamma_F\left(Y\right) = \inf\left\{\lambda>0 : Y\in F\right\}$ the gauge function on $C$. Thus,
\begin{equation*}
	\gamma_I(B',B'') = \max\left\{ \max_{i\in\mathcal{G}^C}\|B_{(i)}'\|_2,\ \frac{1}{\rho}\max_{(i,j)\in(\mathcal{G}\times\mathcal{G})^C}|B_{ij}''| \right\} + \textbf{1}_{I}(B',B''),% \leq \max\left\{1,\frac{1}{\rho}\right\}\max_{i\in\mathcal{G}^C}\|B_{(i)}^{(1)}\|_2kind of(under E),
\end{equation*}
where $\textbf{1}_{I}(B)= 0 $ if $B \in I$ and $\infty$ otherwise. Define 
\[V(Z)=\inf\{\gamma_I(Y) : \ \ Z-Y\in E^\perp,Y\in\real^{N\times 2N}\}\]
for $Z\in\real^{N\times 2N}$. Let $\tilde{\mathcal{M}}=E\cap\text{span}(I)^\perp$ be the set of matrices with correct support in the extended space $\real^{N\times 2N}$, similarly to $\mathcal{M}$ in \eqref{eq:m_activeset}.
Denote by $P_M$ and $P_{M^\perp}$ the projections onto $\tilde{\mathcal{M}}$ and $\tilde{\mathcal{M}}^\perp$. Define the function $\mathcal{H}(Z):\real^{N\times N}\rightarrow\real^{N\times N}$ as 
\begin{equation*}
	\mathcal{H}(Z)_{ij} = \left\{\begin{array}{cl}
		\Tr{H_{(i,j),\mathcal{G}} (P_M Z)_{\mathcal{G},\mathcal{G}}} & \text{if $j\in\mathcal{G}$},\\
		0 & \text{otherwise}.
	\end{array}\right.
\end{equation*}
where $H_{(i,j),\mathcal{G}}$ is the matrix defined in \eqref{eq:informationmatrix}. The Irrepresentability Assumption 3.2 of \cite{lee2015model} requires the existence of $0<\tilde\tau<1$ such that
\begin{equation}
	\sup_{Z\in A} V\left\{P_{M^\perp}\left(R\mathcal{H}(Z) - Z\right)\right\} < 1 - \tilde\tau. \label{eq:irrep_leeetal}
\end{equation}
For a support function $h$, denote by $\partial h(M)=\bigcup_{Y\in M}\partial h(Y)$ the set of subdifferentials of $h$ in $M$. Note that $\partial h_A(M) = A$, since $0\in M$ and $\partial h_A(0) = A$. 
%%%%%%%%%% Lemma %%%%%%%%%%%%%%%%%%%%%%%%%%%%%%%
\begin{lemma}
	If Assumption \ref{assump:irrepresentability} holds and $\rho>\frac{1}{\tau}-\frac{1}{\sqrt{G}}$, then there exists $0<\tilde\tau<1$ such that \eqref{eq:irrep_leeetal} holds.
	\label{lemma:irrepresent}
\end{lemma}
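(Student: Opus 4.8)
The goal is to verify the abstract irrepresentability condition \eqref{eq:irrep_leeetal} of \cite{lee2015model} for the geometrically decomposable reformulation $(\tilde\Omega,A,I,E)$ built in Lemma \ref{lemma:geom_decomp}: one must exhibit $0<\tilde\tau<1$ with $\sup_{Z\in A}V\big(P_{M^\perp}(R\mathcal{H}(Z)-Z)\big)<1-\tilde\tau$. The plan is to make every object in that expression explicit, reduce the $\sup$ to a one-dimensional optimization over how the ``leakage'' from the active block is split between the group norm and the $\ell_1$ part, and read off the threshold on $\rho$.

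The structural observation driving everything is that, since $\mathcal{G}=\{1,\dots,G\}$, the subspace $E$ is a direct sum over unordered node pairs $\{i,j\}$ (each defining vector of $E$ puts equal weight on the two coordinates $B^{(1)}_{ij},B^{(1)}_{ji}$ and the single surviving $B^{(2)}$ coordinate of that pair), hence so is $E^\perp$, so both $P_{M^\perp}$ and the infimum defining $V(\cdot)=\inf\{\gamma_I(Y):\cdot-Y\in E^\perp\}$ decouple pair by pair, and across the off-support nodes. Pairs with $i,j\in\mathcal{G}$ contribute nothing: after $P_{M^\perp}$ the block of $R\mathcal{H}(Z)-Z$ has zero coordinate sum and is matched by $Y=0\in I$. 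For a pair with $i\in\mathcal{G}^C$, $j\in\mathcal{G}$ (so $j<i$), one computes that the block of $R\mathcal{H}(Z)-Z$ carries $\mathcal{H}(Z)_{ij}$ on $B^{(1)}_{ij}$, $-Z^{(1)}_{ji}$ on $B^{(1)}_{ji}$ — the row $Z^{(1)}_{(i)}$ is zero since $i\in\mathcal{G}^C$, and $\mathcal{H}(Z)_{ji}=0$ since $i\notin\mathcal{G}$ — and $0$ on the $B^{(2)}$ coordinate; then $Y\in I$ forces the $\mathcal{G}$-row coordinate $B^{(1)}_{ji}$ of $Y$ to vanish, and the residual $E^\perp$-coset freedom splits the mass $s^{(i)}(Z)_j:=\mathcal{H}(Z)_{ij}-Z^{(1)}_{ji}$ between the group-$\ell_2$ coordinate of node $i$ (weight $1$ in $\gamma_I$) and the $B^{(2)}$ coordinate (weight $1/\rho$). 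Assembling, \eqref{eq:irrep_leeetal} is equivalent to
\[
\sup_{Z\in A}\ \max_{i\in\mathcal{G}^C}\ \inf_{a\in\real^{G}}\ \max\!\Big(\big\|s^{(i)}(Z)-a\big\|_2,\ \tfrac1\rho\,\|a\|_\infty\Big)\ <\ 1 .
\]

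Then I would bound $s^{(i)}(Z)$ uniformly over $Z\in A$: $Z^{(1)}_{ji}$ is governed by the unit-$\ell_2$ bound on the $\mathcal{G}$-rows of members of $A$, while $\mathcal{H}(Z)_{ij}=\Tr{H_{(i,j),\mathcal{G}}(P_MZ)_{\mathcal{G},\mathcal{G}}}=\sum_k\langle(H_{(i,j),\mathcal{G}})_{k\cdot},(P_MZ)_{k\cdot}\rangle$, where $(P_MZ)_{\mathcal{G},\mathcal{G}}$ is the entrywise average forced by the overlap/symmetry relations in $E$; a row-wise Cauchy--Schwarz controls $\mathcal{H}(Z)_{ij}$ by $\sum_k\|(H_{(i,j),\mathcal{G}})_{k\cdot}\|_2$ up to an explicit constant, and Assumption \ref{assump:irrepresentability} turns this into uniform $\ell_2$ and $\ell_\infty$ bounds on $s^{(i)}(Z)$ in terms of $1-\tau$, the supremum being attained at an extreme point of $A$. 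For a fixed vector the inner infimum is solved by $a$ proportional to it; carrying out that scalar optimization and using $\|s\|_\infty\le\|s\|_2\le\sqrt G\,\|s\|_\infty$ reduces the whole inequality to a closed-form condition on $\rho,\tau,G$ that holds precisely when $\rho>1/\tau-1/\sqrt G$, i.e.\ under \eqref{eq:rho_lowerbound}; the strict gap gives $\tilde\tau>0$.

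The hard part is the bookkeeping in the first step — pinning down which coordinates of $R\mathcal{H}(Z)-Z$ survive $P_{M^\perp}$ and how the $E^\perp$-coset redistributes the surviving mass among the coordinate types of each pair — and, in the last step, extracting the sharp constant so that the $\rho$-threshold is exactly $1/\tau-1/\sqrt G$ and not merely a stronger sufficient condition. The $1/\sqrt G$ reflects that the worst case is a spread-out, not spiky, $s^{(i)}(Z)$, which is exactly where the group overlap bites and forces $\rho$ to be bounded below; at $\rho=0$ no such $\tilde\tau$ exists.
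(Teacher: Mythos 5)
Your reduction of \eqref{eq:irrep_leeetal} is structurally sound and, in substance, mirrors the paper's argument at a finer granularity: your per-node inner problem $\inf_{a\in\real^G}\max\{\|s^{(i)}(Z)-a\|_2,\ \rho^{-1}\|a\|_\infty\}$ is exactly the computation the paper performs for $V(P_{M^\perp}Z)$ (the optimum balances $\|U^{(1)}_{(i)}\|_2=\rho^{-1}|U^{(2)}_{ij}|$ and yields $\sqrt G/(1+\rho\sqrt G)$), and your row-wise Cauchy--Schwarz on $\mathcal{H}(Z)_{ij}$ is the paper's bound $\sup_{Z\in A}\max_{i\in\mathcal{G}^C}\|\mathcal{H}(Z)_{(i)}\|_2\le 1-\tau$ from Assumption \ref{assump:irrepresentability}. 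The difference is that the paper never solves the joint per-node problem: it invokes sublinearity of $V$ (Lemma 3.3 of \cite{lee2015model}) to split $V\left(P_{M^\perp}(R\mathcal{H}(Z)-Z)\right)\le V\left(P_{M^\perp}R\mathcal{H}(Z)\right)+V\left(P_{M^\perp}Z\right)\le(1-\tau)+\sqrt G/(1+\rho\sqrt G)$, and the threshold $\rho>1/\tau-1/\sqrt G$ is precisely the condition that this additive bound be below $1$.

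The genuine problem is your endgame: the claim that the joint optimization reduces to a condition holding ``precisely when $\rho>1/\tau-1/\sqrt G$, not merely a stronger sufficient condition.'' That will not materialize. The only uniform information available is $\|\mathcal{H}(Z)_{(i)}\|_2\le 1-\tau$ together with $|Z^{(1)}_{ji}|\le 1$ for $j\in\mathcal{G}$, so the $Z$-part of $s^{(i)}(Z)$ has entries up to $1$ and $\ell_2$ norm up to $\sqrt G$ \emph{independently} of $\tau$ (your ``uniform bounds on $s^{(i)}(Z)$ in terms of $1-\tau$'' is already off). Under these bounds the worst case of the joint problem is the spread-out aligned configuration, with value $(\sqrt G+1-\tau)/(1+\rho\sqrt G)$, giving the condition $\rho>1-\tau/\sqrt G$, which is strictly weaker than \eqref{eq:rho_lowerbound} whenever $\tau<1$; the lemma's threshold is exactly the ``stronger sufficient condition'' produced by the triangle-inequality split. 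This does not sink a proof of the lemma, since a weaker sufficient condition is still implied by the hypothesis, but you must either (i) carry out the sharp analysis honestly, state the weaker closed form, and note the implication (being careful that you only need an upper bound for the coupling between the $\mathcal{H}$-part and the $Z$-part inside $A$), or (ii) do what the paper does and separate the two contributions by sublinearity of $V$ (or of your per-node gauge), which yields the stated threshold in two lines. As written, the decisive quantitative step is both unexecuted and predicted to deliver an equivalence that does not hold.
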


\begin{proof}[Proof of Lemma \ref{lemma:irrepresent}.]	Since $V$ is sublinear (Lemma 3.3 of \cite{lee2015model}),
	\begin{equation}
		\sup_{Z\in A} V\left\{P_{M^\perp}\left(R\mathcal{H}(Z) - Z\right)\right\}   \leq \sup_{Z\in A} V\left\{P_{M^\perp}\left(R\mathcal{H}(Z)\right)\right\} + \sup_{Z\in A} V\left\{P_{M^\perp}Z\right\}. \label{eq:irrep_VZ}
	\end{equation}
	To bound the first term, note that $E^\perp=\{(Z',Z'')|Z_{ij}'+Z_{ji}'+Z_{ij}''=0, j < i\}$.	Hence,
	\begin{align*}
		V(Y',Y'') = & \inf\left\{\gamma(U',U''):  U_{ij}'-Y_{ij}^{(1)}+ U_{ji}'-Y_{ji}' + U_{ij}''-Y_{ij}''=0,\  j < i \right\}\\
		 \leq & \inf\left\{\gamma(U',U''): \ \ U_{(i)}' = Y_{(i)}', i\in\mathcal{G}^C;  U_{\mathcal{G}^C,\mathcal{G}^C}'' = Y_{\mathcal{G}^C,\mathcal{G}^C}'';\right.\\
		 & \quad\quad\left. (U',U'')- (Y', Y'')\in E^\perp  \right\}\\
	     \leq & \max\left\{ \max_{i\in\mathcal{G}^C}\|Y_{(i)}'\|_2,\ \frac{1}{\rho} \left\|  Y_{\mathcal{G}^C,\mathcal{G}^C}''\right\|_\infty \right\}.
	\end{align*}
	Therefore,
	\begin{align*}
		V\left(P_{M^\perp}\left(R\mathcal{H}(Z)\right)\right)  \leq & \max\left\{ \max_{i\in\mathcal{G}^C}\|(P_{M^\perp}(R\mathcal{H}(Z)))_{(i)}^{(1)}\|_2,\ \frac{1}{\rho} \|(P_{M^\perp}(R\mathcal{H}(Z)))^{(2)}_{\mathcal{G}^C,\mathcal{G}^C}\|_\infty \right\}\\
		= &  \max_{i\in\mathcal{G}^C}\left\|\mathcal{H}(Z)_{(i)}\right\|_2,
	\end{align*}
	which implies that
	\begin{align}
		\sup_{Z\in A} V\left(P_{M^\perp}\left(R\mathcal{H}(Z)\right)\right)  \leq & \sup_{Z\in A} \left\{\max_{i\in\mathcal{G}^C}\left\|\mathcal{H}(Z)_{(i)}\right\|_2\right\}\nonumber\\
		\leq  & \sup_{B\in\real^{G\times G},\|B_{(i)}\|_2\leq 1}\left\{\max_{i\in\mathcal{G}^C}\left\|\left(\Tr{H_{(i,j),\mathcal{G}}B}\right)_{j=1}^N\right\|_2\right\}\nonumber\\
	 \leq & \max_{i\in\mathcal{G}^C}\left\|\left(\sum_{k=1}^G\|(H_{(i,j),\mathcal{G}})_{k\cdot}\|_2\right)_{j=1}^N\right\|_2  =  1-\tau. \label{eq:irrep_firsttermbound}
	\end{align}
	%\jesus{note that the last equation will be usually larger than lasso irrepresentability condition. to make it strictly smaller, it is necessary to divide by N, but result doesn't follow directly}
	%\jesus{Note also that it might be possible to improve this bound to let it depend just in subset $\mathcal{G}^C\times\mathcal{G}^C$ instead of the full set of inactive nodes}
	Let $Z=(Z',Z'')\in A$. Without loss of generality, assume that $Z'_{\mathcal{G},\mathcal{G}}=Z''_{\mathcal{G},\mathcal{G}}=0$ (note that these entries do not change $V(P_{M^\perp}Z)$). Therefore, $P_{M^\perp}Z=Z$. Hence, 
	\begin{align*}
		V(Z) = & \inf\left\{\gamma\left(U',U''\right): \ \ \left(U',U''\right)\in I,\ \left(U',U''\right)-\left(Z',Z''\right)\in E \right\}\\
		= & \inf\left\{\gamma\left(U',U''\right): \ U'_{ij} + U''_{ij}=Z'_{ji}, 1\leq j \leq G, G< i \leq N \right\}\\
		= & \inf\left\{\max\{\max_{i\in\mathcal{G}^C}\|U'_{(i)}\|_2,\frac{1}{\rho} \max_{\substack{j \in\mathcal{G}\\ i\in\mathcal{G}^C}}|U''_{ij}|\} : \ U'_{ij} + U''_{ij}=Z'_{ji}, j \in\mathcal{G}, i\in\mathcal{G}^C \right\}\\
		\leq & \inf\left\{\max\{\max_{i\in\mathcal{G}^C}\|U'_{(i)}\|_2,\frac{1}{\rho} \max_{\substack{j \in\mathcal{G} \\ i\in\mathcal{G}^C}}|U''_{ij}|\} : \ U'_{ij} + U''_{ij}=1, j \in\mathcal{G}, i\in\mathcal{G}^C\right\}
	\end{align*}
	The last bound from  $|Z'_{ji}|\leq 1$ and no longer depends on $Z$.
	It is easy to see that the minimum is attained when, for each $i>G$, 
	\[\left\|U'_{(i)}\right\|_2 = \frac{1}{\rho} \left|U''_{ij}\right|, \  \quad 1\leq j\leq G, \]
	and therefore 
	\begin{equation}
		V(Z) \leq \frac{\sqrt{G}}{1+\rho\sqrt{G}}.\label{eq:irrep_rhobound}
	\end{equation}	
	Moreover, if $Z^\ast\in A$ is defined such that $(Z^\ast)^{(1)}_{G+1,i}=1$ for $i=1,\ldots,G$ and $0$ elsewhere, then $V(Z^\ast)$ achieves this bound, which shows that $\rho > 1- \frac{1}{\sqrt{G}}$ is a necessary condition for the irrepresentability to hold, even in the case where the entries of the Hessian that denote the information between active and inactive edges is zero. Therefore, plugging the bounds \eqref{eq:irrep_firsttermbound} and \eqref{eq:irrep_rhobound} into equation \eqref{eq:irrep_VZ}, we obtain \eqref{eq:irrep_leeetal} holds as long as $1-\tau +  \frac{\sqrt{G}}{1+\rho\sqrt{G}}<1$, which implies that $\rho>\frac{1}{\tau}-\frac{1}{\sqrt{G}}$.	
\end{proof}

%%%%%%%%%%%%%%%%%%%%%%%%%%%%%%%%%%%%%%%%%%%%%%%%%%%%%%%%%%%%%%%%%%%%%%%%%%%%%%%%%%%%%%%%%%%%%%%%%%%%%%%%%%%%%%%%%%%%%%%%%%%%%%%%%%%%%%%%%%%%%%%%%%%%%%%%
%%%%%%%%%%%%%%%%%%%%%%%%%%%%%%%%%%%%%%%%%%%%%%%%%%%% Score function bounds %%%%%%%%%%%%%%%%%%%%%%%%%%%%%%%%%%%%%%%%%%%%%%%%%%%%%%%%%%%%%%%%%%%%%%%%%%%%%
%%%%%%%%%%%%%%%%%%%%%%%%%%%%%%%%%%%%%%%%%%%%%%%%%%%%%%%%%%%%%%%%%%%%%%%%%%%%%%%%%%%%%%%%%%%%%%%%%%%%%%%%%%%%%%%%%%%%%%%%%%%%%%%%%%%%%%%%%%%%%%%%%%%%%%%%
The next lemma establishes a bound on the dual norm of $\Omega$ of the loss gradient function under a sub-Gaussian assumption. Let $\Omega^\ast$ denote the dual norm of $\Omega$, so  $\Omega^\ast(B) =\sup \left\{\left\langle Y,B\right\rangle\left|\ Y\in C, \Omega(Y)\leq 1\right.\right\}$.  
\begin{lemma} Under Assumption \ref{assump:subgaussianity},
	\begin{equation}
		\Bbb{P}\left(\Omega^\ast(\nabla\ell(B^\star))>t\right) \leq 2N^2 \min\left\{ \exp\left( - \frac{n(1+\rho)^2t^2}{N(\sigma^2)}\right), \exp\left( - \frac{n\rho^2t^2}{\sigma^2}\right)\right\}. \label{eq:Omegastar-bound}
	\end{equation}
	\label{lemma:probabilitybound}
\end{lemma}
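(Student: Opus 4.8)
The plan is to separate \eqref{eq:Omegastar-bound} into a deterministic reduction, in which $\Omega^\ast(\nabla\ell(B^\star))$ is bounded by elementary norms of the matrix $\nabla\ell(B^\star)$, and a probabilistic estimate, in which those elementary norms are controlled using the sub-Gaussian Assumption \ref{assump:subgaussianity} together with a union bound.

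For the deterministic step, I would bound the dual norm $\Omega^\ast$ by observing that the penalty $\Omega(B)=\sum_{i=1}^N\|B_{(i)}\|_2+\rho\|B\|_1$ dominates two simpler norms. Since $\|B\|_1=\sum_i\|B_{(i)}\|_1\ge\sum_i\|B_{(i)}\|_2$, we have $\Omega(B)\ge(1+\rho)\sum_i\|B_{(i)}\|_2$, and trivially $\Omega(B)\ge\rho\|B\|_1$. Hence the unit ball of $\Omega$ is contained in the corresponding rescaled unit balls, and passing to dual norms --- the dual of the row-group norm $\sum_i\|\cdot\|_2$ is $\max_i\|\cdot\|_2$, and the dual of $\|\cdot\|_1$ is $\|\cdot\|_\infty$ --- yields
\[
\Omega^\ast(G)\ \le\ \min\left\{\frac{1}{1+\rho}\max_{1\le i\le N}\|G_{(i)}\|_2,\ \frac{1}{\rho}\|G\|_\infty\right\}
\]
for every matrix $G$, in particular for $G=\nabla\ell(B^\star)$. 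I would then use the crude inequality $\max_i\|G_{(i)}\|_2\le\sqrt{N}\,\|G\|_\infty$ so that both branches are expressed through $\|\nabla\ell(B^\star)\|_\infty$.

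For the probabilistic step, write $\nabla\ell(B^\star)=\frac1n\sum_{k=1}^n\nabla\tilde\ell(Y_k,A^{(k)};B^\star)$. By Assumption \ref{assump:subgaussianity} each summand has entries that are sub-Gaussian with parameter $\sigma^2$, and since the pairs $(A^{(k)},Y_k)$ are independent, each entry $\nabla_{ij}\ell(B^\star)$ is an average of $n$ independent sub-Gaussians, hence sub-Gaussian with parameter $\sigma^2/n$, so $\prob(|\nabla_{ij}\ell(B^\star)|>s)\le 2\exp(-ns^2/\sigma^2)$. A union bound over the (at most) $N^2$ entries gives $\prob(\|\nabla\ell(B^\star)\|_\infty>s)\le 2N^2\exp(-ns^2/\sigma^2)$. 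Substituting $s=(1+\rho)t/\sqrt{N}$ into the group-norm branch and $s=\rho t$ into the $\ell_\infty$ branch of the dual-norm inequality, and keeping the smaller of the two resulting probabilities, produces exactly \eqref{eq:Omegastar-bound}.

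I do not expect a serious obstacle. The only point requiring some care is the deterministic step: one must pick the right pair of dominated norms, and it is worth noting that the inequality $\Omega(B)\ge(1+\rho)\sum_i\|B_{(i)}\|_2$ holds for all matrices, so the overlap among the groups induced by the symmetry constraint $B=B^T$ plays no role in upper-bounding the dual norm. One could sharpen the group-norm branch by replacing $\max_i\|\nabla\ell(B^\star)_{(i)}\|_2\le\sqrt{N}\,\|\nabla\ell(B^\star)\|_\infty$ with a $\chi^2$-type concentration bound for $\|\nabla\ell(B^\star)_{(i)}\|_2^2$, but since the crude estimate already reproduces the stated bound \eqref{eq:Omegastar-bound}, I would not pursue that refinement.
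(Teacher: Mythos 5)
Your proposal is correct and matches the paper's own argument: the paper likewise dominates $\Omega$ by $(1+\rho)\sum_i\|B_{(i)}\|_2$ and by $\rho\|B\|_1$, dualizes to get $\Omega^\ast(\Lambda)\le\min\bigl\{\tfrac{1}{1+\rho}\max_i\|\Lambda_{(i)}\|_2,\ \tfrac{1}{\rho}\|\Lambda\|_\infty\bigr\}$, reduces the row norm to the entrywise maximum via the same $\sqrt{N}$ bound, and finishes with Hoeffding for the sub-Gaussian average plus a union bound over the $O(N^2)$ entries. The only differences are immaterial constants (the paper tracks $N(N-1)$ rather than $N^2$ and a factor of $2$ in the exponent), so no changes are needed.
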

\begin{proof}[Proof of Lemma \ref{lemma:probabilitybound}]
	By Hoeffding's inequality for sub-Gaussian variables, for all $j,k$ and $t>0$,
	\[\prob\left(\left|\nabla_{jk}\ell\left(B^\star\right)\right|>t\right) \leq
	\prob\left(\left|\frac{1}{n}\sum_{i=1}^n\nabla_{jk}\ell_i\left(B^\star\right)\right|>t\right) \leq  2\exp\left(-n\frac{t^2}{\sigma^2}\right).\]
	Note that $(1+\rho)\sum_{i=1}^N\|B_{(i)}\|_2 \leq \Omega(B)$. Let $\Phi(B)=\frac{1}{1+\rho}\max_{i=1,\ldots N}\|B_{(i)}\|_2$. Thus,
	\begin{equation}
		\Omega^\ast(B)\leq \sup_{Y\in\real^{N\times N}}\left\{\text{Tr}(YB):\Omega_{\rho=0}(Y)\leq \frac{1}{1+\rho} \right\}=\Phi(B).\label{eq:bounddual}
	\end{equation}
	In a similar manner, $\rho\|B\|_1\leq \Omega(B)$. Setting $\Xi(B) = \frac{1}{\rho}\|B\|_\infty$, we have
	\begin{equation}
		\Omega^\ast(B)\leq \Xi(B).\label{eq:bounddual2}
	\end{equation}
	Using \eqref{eq:bounddual} and setting $\Lambda=\nabla\ell(B^\star)$,
	\begin{eqnarray*}
		\Bbb{P}\left\{\Omega^\ast(\Lambda)>t\right\} & \leq & \Bbb{P}\left\{\Phi(\Lambda) > t\right\}\nonumber\\
		& = & \Bbb{P}\left\{\max_{1\leq i \leq N}\|\Lambda_{(i)}\|_2>(1+\rho)t\right\}\\
		& \leq & \Bbb{P}\left\{\max_{1\leq i \leq N}\max_{j\neq i}|\Lambda_{ij}|>(1+\rho)\frac{t}{\sqrt{N}}\right\}\\		
		& \leq &2N(N-1)\exp\left\{-\frac{n(1+\rho)^2t^2}{2\sigma^2(N-1)}\right\}, \label{eq:probbound}
	\end{eqnarray*}
	the last inequality obtained by arguments similar to Lemma 4.3 of \cite{lee2015model}. In the same way, we can also bound the previous quantity using \eqref{eq:bounddual2} by
	\begin{eqnarray*}
		\Bbb{P}\left(\Omega^\ast(\Lambda)>t\right) & \leq & \Bbb{P}\left(\Xi(\Lambda) > t\right)\nonumber\\
		& = & \Bbb{P}\left(\|\Lambda_{(i)}\|_\infty>\rho t\right)\\
		& \leq & {N(N-1)}\exp\left(-\frac{n\rho^2t^2}{2\sigma^2}\right). \label{eq:probbound2}
	\end{eqnarray*}
	Combining \eqref{eq:probbound} and \eqref{eq:probbound2}, we can obtain equation \eqref{eq:Omegastar-bound}. 
\end{proof}
%%%%%%%%%%%%%%%%%%%%%%%%%%%%%%%%%%%%%%%%%%%%%%%%%%%%%%%%%%%%%%%%%%%%%%%%%%%%%%%%%%%%%%%%%%%%%%%%%%%%%%%%%%%%%%%%%%%%%%%%%%%%%%%%%%%%%%%%%%%%%%%%%%%%%%%%
%%%%%%%%%%%%%%%%%%%%%%%%%%%%%%%%%%%%%%%%%%%%%%%%%%%% Compatibility constants %%%%%%%%%%%%%%%%%%%%%%%%%%%%%%%%%%%%%%%%%%%%%%%%%%%%%%%%%%%%%%%%%%%%%%%%%%%
%%%%%%%%%%%%%%%%%%%%%%%%%%%%%%%%%%%%%%%%%%%%%%%%%%%%%%%%%%%%%%%%%%%%%%%%%%%%%%%%%%%%%%%%%%%%%%%%%%%%%%%%%%%%%%%%%%%%%%%%%%%%%%%%%%%%%%%%%%%%%%%%%%%%%%%%

For a semi-norm $\Psi:\real^{N\times N}\rightarrow \real$, let $\kappa_\Omega$ be the compatibility constant between $\Psi$ and the Frobenius norm, defined as
\begin{equation*}
	\kappa_\Psi = \sup\left\{\Psi(B):\|B\|_2\leq 1, B\in M\right\},
\end{equation*}
and let $\kappa_{\text{IC}}$ be the compatibility constant between the irrepresentable term and the dual norm $\Omega^\ast$ given by
\begin{equation*}
	\kappa_\text{IC} = \sup\left\{V(P_{M^\perp}(R\mathcal{H}Z-Z)) : \Omega^\ast(Z)\leq 1 \right\} \ . 
\end{equation*}

\begin{lemma} \label{lemma:compconsts}
	The following bounds on the compatibility constants hold: 
	\begin{eqnarray*}
		\kappa_\Omega & = & \sqrt{G} + \rho\sqrt{G(G-1)}, \label{eq:compat_Omega}\\
		\kappa_{\Omega^\ast} & \leq & \frac{1}{1+\rho},\label{eq:compat_Omegaast}\\
		\kappa_\text{IC} & \leq  & 3 - \tau.
	\end{eqnarray*}	
\end{lemma}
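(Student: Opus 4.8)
The lemma has three independent pieces, and I would handle them in order of increasing difficulty. For $\kappa_\Omega$, normalize $\mathcal{G}=\{1,\dots,G\}$ as in the appendix, so every $B\in M$ is symmetric, supported on the $G\times G$ block, and has vanishing diagonal, hence at most $G(G-1)$ nonzero entries, all in its first $G$ rows. I would bound the two pieces of the penalty by Cauchy--Schwarz: $\sum_{i=1}^N\|B_{(i)}\|_2=\sum_{i\in\mathcal{G}}\|B_{(i)}\|_2\le\sqrt{G}\,\bigl(\sum_{i\in\mathcal{G}}\|B_{(i)}\|_2^2\bigr)^{1/2}=\sqrt{G}\,\|B\|_2$ and $\|B\|_1\le\sqrt{G(G-1)}\,\|B\|_2$, which gives $\kappa_\Omega\le\sqrt{G}+\rho\sqrt{G(G-1)}$ once $\|B\|_2\le1$. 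For the matching lower bound I would exhibit the equalizer: the symmetric matrix with entries $1/\sqrt{G(G-1)}$ on all pairs of distinct nodes in $\mathcal{G}$ and zero elsewhere has unit Frobenius norm and makes both Cauchy--Schwarz steps tight at once (all active rows have equal norm, all active entries equal magnitude), so the supremum equals $\sqrt{G}+\rho\sqrt{G(G-1)}$. The bound on $\kappa_{\Omega^\ast}$ is then immediate from the proof of Lemma \ref{lemma:probabilitybound}: there \eqref{eq:bounddual} gives $\Omega^\ast(B)\le\Phi(B)=\tfrac{1}{1+\rho}\max_i\|B_{(i)}\|_2$, and since $\max_i\|B_{(i)}\|_2\le\|B\|_2\le1$ on $M$, the supremum defining $\kappa_{\Omega^\ast}$ is at most $\tfrac{1}{1+\rho}$.

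For $\kappa_\text{IC}$ I would recycle the machinery of the proof of Lemma \ref{lemma:irrepresent}. By sublinearity of $V$ (Lemma 3.3 of \cite{lee2015model}), for any $Z$ with $\Omega^\ast(Z)\le1$,
\[
V\bigl(P_{M^\perp}(R\mathcal{H}(Z)-Z)\bigr)\ \le\ V\bigl(P_{M^\perp}R\mathcal{H}(Z)\bigr)+V\bigl(P_{M^\perp}Z\bigr).
\]
For the first term I would repeat the chain \eqref{eq:irrep_firsttermbound}: passing to the quotient by $E^\perp$ reduces it to $\max_{i\in\mathcal{G}^C}\|\mathcal{H}(Z)_{(i)}\|_2$, and once the appropriate bound on $(P_MZ)_{\mathcal{G},\mathcal{G}}$ is extracted from $\Omega^\ast(Z)\le1$ via \eqref{eq:bounddual}--\eqref{eq:bounddual2} and the two constants just computed, the definition \eqref{eq:informationmatrix} of $\mathcal{H}$ together with Assumption \ref{assump:irrepresentability} bounds it by $1-\tau$. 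For the second term I would bound $V(P_{M^\perp}Z)\le\max\bigl\{\max_{i\in\mathcal{G}^C}\|Z_{(i)}\|_2,\ \tfrac{1}{\rho}\|Z\|_\infty\bigr\}$ exactly as in the proof of Lemma \ref{lemma:irrepresent}, then convert $\Omega^\ast(Z)\le1$ into entrywise and rowwise control on $Z$ through \eqref{eq:bounddual}--\eqref{eq:bounddual2} (using the lower bound on $\rho$ in \eqref{eq:rho_lowerbound}), at a cost of at most a factor $2$. Adding the two contributions yields $\kappa_\text{IC}\le(1-\tau)+2=3-\tau$.

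The first two bounds are routine; the genuine obstacle is $\kappa_\text{IC}$. One must keep careful track of the identification between $\real^{N\times N}$ and the lifted space $\real^{N\times 2N}$ of Lemma \ref{lemma:geom_decomp}, of which coordinates of $R\mathcal{H}(Z)$ survive the projections $P_M$ and $P_{M^\perp}$, and—most delicately—of how the single scalar constraint $\Omega^\ast(Z)\le1$ propagates into the pointwise bounds on $Z$ and on $P_MZ$ that are needed to invoke Assumption \ref{assump:irrepresentability} with the clean constant $1-\tau$ rather than something involving $\kappa_\Omega$. Pinning the additive contribution of the $V(P_{M^\perp}Z)$ term down to exactly $2$ is the other place where the bookkeeping has to be done carefully.
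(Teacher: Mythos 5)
Your treatment of $\kappa_\Omega$ and $\kappa_{\Omega^\ast}$ is correct and essentially the paper's own argument: the paper simply asserts the maximizer (all $G(G-1)$ active off-diagonal entries equal to $1/\sqrt{G(G-1)}$) that your Cauchy--Schwarz step plus equalizer makes explicit, and it bounds $\kappa_{\Omega^\ast}$ through \eqref{eq:bounddual} exactly as you do. The genuine gap is in $\kappa_{\text{IC}}$, at the very point you flag as delicate. Your plan is to ``convert $\Omega^\ast(Z)\le 1$ into entrywise and rowwise control on $Z$ (and on $(P_MZ)_{\mathcal{G},\mathcal{G}}$) through \eqref{eq:bounddual}--\eqref{eq:bounddual2}'', but those displays are upper bounds on the dual norm, $\Omega^\ast\le\Phi$ and $\Omega^\ast\le\Xi$; from $\Omega^\ast(Z)\le 1$ they yield no bound at all on $\max_i\|Z_{(i)}\|_2$ or on $\|Z\|_\infty$ --- the implication runs in the wrong direction. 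This breaks both halves of your argument. The chain \eqref{eq:irrep_firsttermbound} produces the clean constant $1-\tau$ only because $Z\in A$ forces $\max_{i\in\mathcal{G}}\|Z^{(1)}_{(i)}\|_2\le 1$, which is what permits passing to the supremum over $\{B:\|B_{(i)}\|_2\le 1\}$ in Assumption \ref{assump:irrepresentability}; under the constraint $\Omega^\ast(Z)\le 1$ that rowwise bound cannot come from \eqref{eq:bounddual}--\eqref{eq:bounddual2}. For the second half, the correct mechanism for entrywise control is to test $Z$ against unit-$\Omega$ matrices (e.g.\ the symmetric matrix with $1/(2(1+\rho))$ in positions $(i,j)$ and $(j,i)$ and zeros elsewhere), which gives only $\|Z\|_\infty\le 1+\rho$, hence $\tfrac{1}{\rho}\|Z\|_\infty\le 1+\tfrac{1}{\rho}$; this exceeds $2$ whenever $\rho<1$, and \eqref{eq:rho_lowerbound} does not force $\rho\ge 1$ (nor is \eqref{eq:rho_lowerbound} even a hypothesis of this lemma), so your ``cost of at most a factor $2$'' is not substantiated by the route you describe.

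A repair that does deliver $(1-\tau)+2$, and is presumably what the paper's one-line assertion is doing, avoids extracting pointwise bounds from $\Omega^\ast(Z)\le1$ altogether: by Lemma \ref{lemma:geom_decomp} the penalty is geometrically decomposable, so the dual unit ball is the subdifferential at zero, $\{Z:\Omega^\ast(Z)\le1\}=A+I+E^\perp$ in the lifted space; write a feasible $Z$ as $Z_A+Z_I+Z_{E^\perp}$. Since $\text{span}(I)$ and $E^\perp$ are orthogonal to $\tilde{\mathcal{M}}$, one has $P_MZ=P_MZ_A$, so the $\mathcal{H}$-term reduces to the case $Z_A\in A$ already handled in \eqref{eq:irrep_firsttermbound} and is at most $1-\tau$; and by sublinearity of $V$, $V(P_{M^\perp}Z)\le V(P_{M^\perp}Z_A)+V(Z_I)+V(Z_{E^\perp})\le \frac{\sqrt{G}}{1+\rho\sqrt{G}}+1+0\le 2$, using \eqref{eq:irrep_rhobound} for the $A$-component, $\gamma_I\le 1$ on $I$, and $V=0$ on $E^\perp$. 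Note this gives constants independent of $\rho$, which is what the lemma needs.
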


\begin{proof}[Proof of Lemma \ref{lemma:compconsts}.] Note that $\Omega(Y)$ is maximized on $\{Y:\|Y\|_2\leq 1\}$ when all entries of $Y$ have magnitude equal to $\frac{1}{\sqrt{G(G-1)}}$.   Therefore 
	\begin{equation}
		\kappa_\Omega = G\sqrt{\frac{G-1}{G(G-1)}} + \rho\frac{G(G-1)}{\sqrt{G(G-1)}} = \sqrt{G} + \rho\sqrt{G(G-1)}. 
	\end{equation}
	Similarly,  \eqref{eq:bounddual} implies
	\begin{equation}
		\kappa_{\Omega^\ast} \leq \sup\left\{\frac{1}{1+\rho}\max_{i\in\mathcal{G}}\|B_{(i)}\|_2:\|B\|_2\leq 1 \right\} \leq \frac{1}{1+\rho}.
	\end{equation}
	Finally, 
	\begin{align*}
		\kappa_{IC}  = & \sup\left\{V(P_{M^\perp}(R\mathcal{H}Z-Z)) : \Omega^\ast(Z)\leq 1 \right\}\\
		 \leq & \sup\left\{V(P_{M^\perp}(R\mathcal{H}Z)) :  \Omega^\ast(Z)\leq 1 \right\} + \sup\left\{V(P_{M^\perp}(Z)) :  \Omega^\ast(Z)\leq 1 \right\}\\
		 \leq & (1 - \tau ) + 2 = 3-\tau.
	\end{align*}
	
\end{proof}

%%%%%%%%%%%%%%%%%%%%%%%%%%%%%%%%%%%%%%%%%%%%%%%%%%%%%%%%%%%%%%%%%%%%%%%%%%%%%%%%%%%%%%%%%%%%%%%%%%%%%%%%%%%%%%%%%%%%%%%%%%%%%%%%%%%%%%%%%%%%%%%%%%%%%%%%
%%%%%%%%%%%%%%%%%%%%%%%%%%%%%%%%%%%%%%%%%%%%%%%%%%%% Main proof here %%%%%%%%%%%%%%%%%%%%%%%%%%%%%%%%%%%%%%%%%%%%%%%%%%%%%%%%%%%%%%%%%%%%%%%%%%%%%%%%%%%
%%%%%%%%%%%%%%%%%%%%%%%%%%%%%%%%%%%%%%%%%%%%%%%%%%%%%%%%%%%%%%%%%%%%%%%%%%%%%%%%%%%%%%%%%%%%%%%%%%%%%%%%%%%%%%%%%%%%%%%%%%%%%%%%%%%%%%%%%%%%%%%%%%%%%%%%
\begin{proof}[Proof of Proposition \ref{prop:error}]
	Part (a). Since $\hat{B}$ minimizes the problem \eqref{eq:main_problem}, 
	\begin{equation*}
		\ell(\hat{B}) + \lambda\Omega(\hat{B}) \leq \ell(B^\star) + \lambda\Omega(B^\star).
	\end{equation*}
	Rearranging the terms, using Assumption \ref{assump:RSC}, by the triangle inequality and the generalized Cauchy-Schwarz inequality,
	\begin{align}
		0  \geq & \ell(\hat{B}) - \ell(B^\star) + \Omega(\hat{B}) - \Omega(B^\star) \nonumber\\
		\geq & \left\langle \nabla\ell(B^\star)^T, \hat{B}-B^\star \right\rangle + \frac{m}{2}\|\hat{B}-B^\star\|_2^2 - \Omega\left(\hat{B} - B^\star\right)\nonumber\\
		 \geq & -\Omega\left(\hat{B} - B^\star\right)\Omega^\ast\left(\nabla\ell(B^\star)\right)- \Omega\left(\hat{B} - B^\star\right) + \frac{m}{2}\|\hat{B}-B^\star\|_2^2.\label{eq:proof1-lowerbound}
	\end{align}
	By the argument for computing $\kappa_\Omega$ in \eqref{eq:compat_Omega}, $\Omega(Y)\leq \{\sqrt{N} + \rho\sqrt{N(N-1)}\}\|Y\|_2$. Rearranging the terms in \eqref{eq:proof1-lowerbound},
	\begin{equation*}
		\|\hat{B}-B^\star\|_2\leq \frac{2}{m}\left\{\sqrt{N} + \rho\sqrt{N(N-1)}\right\}\left\{\lambda + \Omega^\ast\left(\hat{B} - B^\star\right)\right\}.
	\end{equation*}
	For any $\rho$, setting $\lambda = 2 \sqrt{\frac{\sigma^2\log N}{n}} \min \left\{\frac{\sqrt{N}}{1+\rho}, \frac{1}{\rho}\right\}$, by Lemma \ref{lemma:probabilitybound}, with probability at least $1-2/N$, 
	\begin{align}
		\|\hat{B}-B^\star\|_2  \leq & \frac{4}{m}\left\{\sqrt{N} + \rho\sqrt{N(N-1)}\right\}\lambda\nonumber\\
		 \leq & \frac{4}{m}\sqrt{\frac{\sigma^2\log N}{n}}\left\{\sqrt{N} + \rho N\right\}\min \left\{\frac{\sqrt{N}}{1+\rho}, \frac{1}{\rho}\right\} \nonumber\\
		 \leq & \frac{4}{m}\sqrt{\frac{\sigma^2\log N}{n}} N\min \left\{1+\rho \sqrt{N}, 1 +  \frac{1}{\rho\sqrt{N}}\right\} \nonumber\\
		 \leq & \frac{4}{m}\sqrt{N^2\frac{\sigma^2\log N}{n}}. \label{eq:errorboundlambda}
	\end{align}

	Part (b). Lemma \ref{lemma:geom_decomp} gives a geometric decomposition of the penalty. Therefore, we can directly use Theorem 3.1 of \cite{lee2015model}, since Lemma \ref{lemma:irrepresent} also ensures that their irrepresentability condition holds. Thus,
	\begin{equation*}
		\|\hat{B}-B^\star\|_2\leq \frac{2}{m}\kappa_\Omega\left(1 + \frac{\tau}{4\kappa_{\text{IC}}}\right)\lambda,
	\end{equation*}
	and $\hat{\mathcal{G}} \subseteq\mathcal{G}$ as long as 
	\begin{equation}
		\frac{4\kappa_{\text{IC}}}{\tau}\Omega^\ast\left(\nabla\ell(B^\star)\right) < \lambda < \frac{m^2\tau}{2L\kappa_\Omega^2\kappa_{\Omega^\ast}\kappa_\text{IC}}\left(1+\frac{\tau}{4\kappa_\text{IC}}\right)^{-2}.\label{eq:lambdarange}
	\end{equation}
	Setting 
	\[\lambda = \frac{8\kappa_\text{IC}}{\tau}\sqrt{\frac{\sigma^2\log N}{n}} \min \left\{\frac{\sqrt{N}}{1+\rho}, \frac{1}{\rho}\right\},\] 
	using a similar argument than \eqref{eq:errorboundlambda}, the left hand size of  \eqref{eq:lambdarange} holds with probability at least $1-2/N$. The right hand side of \eqref{eq:lambdarange} holds as long as the sample size satisfies
	\begin{equation*}
		n > C(L, m, \tau, \kappa_\Omega, \kappa_{\Omega^\ast}, \kappa_{IC}) \left(\sqrt{G} + G\right)^2\sigma^2\log N,
	\end{equation*}
	with $C(L, m, \tau, \kappa_\Omega, \kappa_{\Omega^\ast}, \kappa_{IC})>0$ a positive constant. Therefore, claims \eqref{eq:frobenius_rate_Irrep} and \eqref{eq:support_consistency} follow. 
	
\end{proof}

\section{Data aquisition and pre-processing}
\label{appendix:data}

\subsection{Subjects and imaging}

\subsubsection*{The COBRE data}

Raw anatomic and functional scans from 146 subjects (72 psychosis patients and 74 healthy control subjects) were downloaded from a public database (\url{http://fcon_1000.projects.nitrc.org/indi/retro/cobre.html}). Four subjects coded as ambidextrous (2 patients, 2 controls) were excluded to yield 70 psychosis patients and 72 controls for analysis. To enter the COBRE dataset, subjects had a diagnosis of either schizophrenia or schizoaffective disorder and were without histories of neurological disorder, mental retardation, severe head trauma with more than 5 minutes loss of consciousness and substance abuse/dependence within the last 12 months.

In the primary sample, two schizophrenic (SCZ) subjects and one healthy control (HC) subject had insufficient voxels in the cerebrospinal fluid (CSF) segmentation on the CSF, and they were dropped from additional analyses. Two additional SCZ subjects were excluded for scrub ratios (see discussion of \emph{scrubbing routine} in fMRI Data Analysis) greater than 0.6, leaving 38 SCZ subjects and 42 HC subjects for the analysis. In the replication sample, 15 psychosis patients and two control subjects were excluded for scrub ratios greater than 0.6; one patient was excluded with incomplete data,  leaving 54 SCZ and 69 HC subjects for analysis (see Table \ref{table:patients}). 

A full description of the imaging parameters for the COBRE dataset is available online at the link provided above and in  \cite{Aine2017}.

\subsubsection*{The UMich data}

Subjects were selected from experiments conducted by Professor Stephan F. Taylor at the University of Michigan between 2004 and 2011 for task-based fMRI studies, which included resting state scans. Thirty-nine stable outpatients were selected with DSM-IV schizophrenia or schizoaffective disorder (SCZ) \citep{american1994diagnostic}. Forty healthy comparison (HC) subjects, without a lifetime history of Axis I psychiatric disorders \citep{first1995structured}, were selected to approximate the age range, gender distribution and family education level of the patients. Prior to initial data collection, all subjects gave written, informed consent to participate in the protocol approved by the University of Michigan institutional review board (IRBMED).

MRI scanning occurred on a GE 3T Signa scanner (LX [8.3] release, General Electric Healthcare, Buckinghamshire, United Kingdom). Functional images were acquired with a T2*-weighted, reverse spiral acquisition sequence (gradient recalled echo, TE=30 msec, FA=90 degrees, field of view=22 cm, 40 slices, 3.0mm thick/0mm skip, equivalent to 64 x 64 voxel grid – yielding isotropic voxels 3 mm on edge). Because the data were acquired across different experiments, acquisition parameters differed slightly in the aggregate sample: 240 volumes at TR=1500 msec (11 SCZ, 10 HC), 180 volumes at TR=2000 msec (17 SCZ, 16 HC) and 240 volumes at 2000 msec (14 SCZ, 17 HC). Acquisitions were acquired in the resting state with eyes open and fixated on a large ‘plus’ sign projected on a monitor.

\begin{table}[!htbp]
	\centering
	{\footnotesize
	\begin{tabular}{llllll}
		\hline 
		\textbf{Dataset}& \textbf{\# nodes} & \textbf{Status}  & \textbf{\#} & \textbf{Sex M/F} & \textbf{Age mean (s.d.)} \\ 
		\hline 
		COBRE & 263 & Schizophrenic & 54 & 48/6  & 35.4 (13.1) \\ 
		& & Control & 70 & 48/22 & 35.1 (11.5) \\ 
		\hline 
		UMich & 264 &  Schizophrenic & 39 & 29/10 & 40.7 (11.5) \\ 
		& & Control & 40 & 28/12  & 36.8 (12.3) \\ 
		\hline 
	\end{tabular} }
	\caption{Summary of the two datasets.}
	\label{table:patients}
\end{table}

\subsection{Pre-processing}

We first performed standard pre-processing steps. All scans were slice-time corrected and realigned to the 10th image acquired during a scanning session \citep{jenkinson2002improved}. Subsequent processing was performed with the Statistical Parametric Mapping SPM8 package (Wellcome Institute of Cognitive Neurology, London). Anatomic normalization was done with the VBM8 toolbox in SPM8, using the high resolution structural scans obtained for both datasets. Normalizing warps were applied to the co-registered, functional volumes, which were re-sliced and smoothed with an 8 mm isotropic Gaussian smoothing kernel. To assess and manage movement, we calculated the frame-wise displacement (FD) \citep{power2012spurious}, for all 6 parameters of rotation and translation. We used a \textit{scrubbing routine} to censor any frame with FD $>$ 0.5 mm from the regression analysis described below, yielding a scrub ratio for each subject. Three-compartment segmentation of the high-resolution structural image from the VBM8 normalization was applied to the functional time series to extract cerebral spinal volume (CSF) and white matter (WM) compartments, which were then subjected to a principal component analysis to identify the top 5 components in each \citep{behzadi2007component}, which should correspond to heart rate and respiratory effects on global signal \citep{chai2012anticorrelations}. Multiple regressions were applied to the time series to remove the following nuisance effects: linear trend, 6 motion parameters, their temporal derivatives, the quadratics of these 12 parameters, 5 components from the PCA of CSF, 5 components of PCA of WM, followed by band pass filtering from 0.01 – 0.1 Hz, and then motion scrubbing. For each 4D data set, time courses were then extracted from 10 mm diameter spheres based on the 264 sets of coordinates identified by \cite{power2011functional}. From these time series, a cross-correlation matrix of Pearson r-values was obtained and Fisher's R-to-Z transformation was applied for each of the 264 nodes with every other node (for COBRE dataset, node 75 is missing). Finally, for each individual, edge weights were assigned to be ranks of these score, with edge scores ranked separately for each subject, and then these values were centered and standardized across the individuals. Ranks have been used previously in brain connectomic studies to reduce the effect of potential outliers \citep{yan2013standardizing}; although some information is lost with the rank transformation, we observed that while ranks do not increase the classification accuracy significantly, they tend to produce sparser solutions with a similar accuracy to Pearson correlations.

\end{document}